\documentclass[lettersize,journal]{IEEEtran}
\IEEEoverridecommandlockouts
\usepackage{amsmath,amsfonts}
\usepackage{algorithmic}
\usepackage{array}
\usepackage[caption=false,font=normalsize,labelfont=sf,textfont=sf]{subfig}
\usepackage{textcomp}
\usepackage{stfloats}
\usepackage{url}
\usepackage{verbatim}
\usepackage{graphicx}

\usepackage{cite}
\usepackage[linesnumbered,ruled,vlined]{algorithm2e}
\usepackage{amsmath,amssymb,amsfonts}
\usepackage{graphicx}
\usepackage{textcomp}
\usepackage[table]{xcolor}
\usepackage{booktabs}
\usepackage{multirow}
\usepackage{amsthm}
\usepackage{hyperref}
\usepackage{cleveref}
\usepackage{tabularx}
\usepackage{makecell}

\definecolor{BestCell}{RGB}{245,226,250}
\newcommand{\bestcell}[1]{\cellcolor{BestCell}\textbf{#1}}

\newtheorem{assumption}{Assumption}

\newtheorem{theorem}{Theorem}
\newtheorem{lemma}{Lemma}

\newtheorem{remark}{Remark}

\Crefname{figure}{Fig.}{Figs.}
\Crefname{equation}{Eq.}{Eqs.}

\def\ie{\emph{i.e.}}

\hypersetup{
    colorlinks=true, 
    linkcolor=blue,  
    filecolor=blue,  
    urlcolor=black,   
    citecolor=blue,  
    pdfborder={0 0 0} 
}

\def\BibTeX{{\rm B\kern-.05em{\sc i\kern-.025em b}\kern-.08em
    T\kern-.1667em\lower.7ex\hbox{E}\kern-.125emX}}
\usepackage{balance}
\begin{document}
\title{AirMoE: Statistic-Augmented Over-the-Air MoE for Collaborative Intelligence}
\author{Wei-Bin Kou, Jingreng Lei, Guangxu Zhu$^*$, Yujiu Yang$^*$
\thanks{Wei-Bin Kou and Yujiu Yang are with Tsinghua Shenzhen International Graduate School, Tsinghua University, Shenzhen, China.}
\thanks{Jingreng Lei is with the Department of Electrical and Electronic Engineering, The University of Hong Kong, Hong Kong, China.}
\thanks{Guangxu Zhu is with Shenzhen Research Institute of Big Data, Shenzhen, China.}
\thanks{Corresponding authors: Guangxu Zhu and Yujiu Yang}
}

\markboth{Journal of \LaTeX\ Class Files,~Vol.~18, No.~9, September~2020}%
{How to Use the IEEEtran \LaTeX \ Templates}

\maketitle

\begin{abstract}
In modern edge intelligence, Mixture of Experts (MoE) are increasingly deployed over wireless cloud-edge networks, as a single edge device lacks sufficient resources to host large-scale models locally. In this distributed architecture, a cloud-hosted pretrained Large Model (LM) acts as a shared backbone for latent feature extraction, while heterogeneous experts deployed across distributed, wirelessly-connected clients collaboratively form the task head. However, deploying MoE over wireless links exposes two coupled bottlenecks. On the one hand, routing which clients to activate generally overloads bandwidth-limited uplinks due to required raw feature transmission. On the other hand, aggregating the activated experts' outputs over wireless links is hindered by channel noise and poor scalability. To break these bottlenecks, we propose a statistic-augmented over-the-air MoE (AirMoE) paradigm. Specifically, on the routing side, each client queries its local Feature Retrieval Library (FRL) with a cloud-broadcast compact query, retrieves a prototype-induced statistic, and reports it digitally to the cloud, drastically reducing uplink traffic; the cloud then selects the most relevant clients by aligning these statistics with the LM-extracted features via Jensen--Shannon (JS) divergence. On the aggregating side, selected experts simultaneously transmit their outputs over the multiple-access channel, which physically computes the reweighted sum via waveform superposition, with reweighting coefficients realized through channel-aware power control. The two mechanisms are thus decoupled both algorithmically and physically. We further provide theoretical analyses on convergence and iteration complexity. Taking semantic segmentation task as an example, extensive experiments demonstrate that AirMoE outperforms MoE baselines and single-model competitors. Ablations further confirm the effectiveness of each incorporated component.
\end{abstract}

\begin{IEEEkeywords}
Mixture of Experts (MoE), Statistic-Augmented Routing, Over-the-Air Aggregating, Edge Intelligence, Cloud-Edge Distributed Inference, Large Models (LMs).
\end{IEEEkeywords}

\section{Introduction}

The relentless growth of Mixture of Experts (MoE) models has pushed their resource demands far beyond the capability of a single edge device. Consequently, modern edge intelligence increasingly deploys MoE in a cloud--edge collaborative setting to split the computation \cite{11519618,ref:edgeai2,ref:moesurvey,ref:scalingvit,xue2024wdmoe,song2025mixture}. In this distributed architecture, a powerful cloud server hosts a heavyweight, pretrained Large Model (LM) that acts as a fixed, shared backbone responsible for latent feature extraction. Meanwhile, lightweight, specialized experts are deployed across distributed, wirelessly connected clients to collaboratively form the task head. Because each client may have acquired expertise on a specific input distribution (e.g., a particular environment, sensor, or operating condition), the population of clients naturally forms a heterogeneous expert pool. As in any MoE system, harnessing this expert pool hinges on two core operations: \emph{routing}, which selects the most relevant experts for each input's LM-extracted features, and \emph{aggregating}, which fuses the selected experts' outputs into the final prediction.

However, once the experts are scattered behind wireless links, these two routine operations turn into two coupled bottlenecks that do not exist in a single-machine MoE. First, routing becomes a communication-dominant problem rather than a purely algorithmic one: naively transporting the high-dimensional latent features required for routing decisions would saturate the bandwidth-limited uplink channel \cite{11551290}. Second, aggregating becomes a multiple-access scalability problem: instead of fusing expert outputs by a weighted sum at one location as in single-machine MoE, the cloud has to collect each selected client's full output over orthogonal uplinks (e.g., TDMA) and fuse them digitally, thus both the uplink cost and the scheduling latency scale linearly with the number of activated experts. The two bottlenecks of the wireless MoE are illustrated in \Cref{fig:bottlenecks}.

\begin{figure}[tp]
\includegraphics[width=\linewidth]{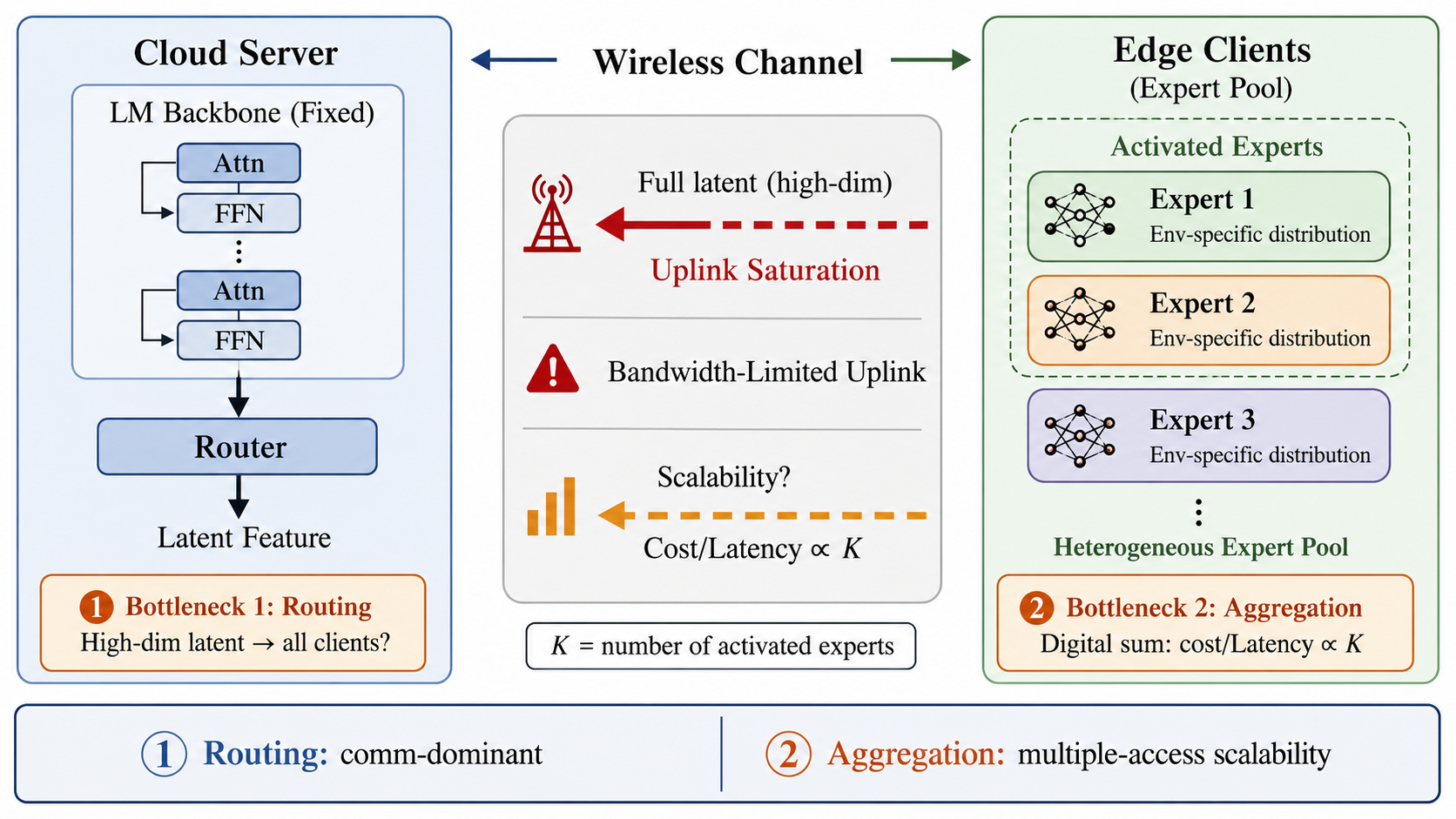}
\vspace{-0.7cm}
\caption{Illustration of the two coupled bottlenecks in wireless MoE.}
\label{fig:bottlenecks}
\vspace{-0.38cm}
\end{figure}

Fortunately, the two bottlenecks exhibit fundamentally different natures, which in turn suggests different remedies.  Routing, on the one hand, is inherently a discrete selection problem. It suffices to select most relevant experts merely based on the compact statistics rather than costly raw features. This works because the routing function is invariant to most of the variation in the raw features. For example, two inputs from the same road condition should be routed identically even if their pixel-level features differ. The statistic deliberately eliminates the variations that are irrelevant to routing decisions in the input. On the other hand, aggregating admits an elegant physical-layer solution via Over-the-air computing (AirComp)~\cite{ref:broadbandagg,ref:airsgd,ref:aircomp}, which leverages the waveform-superposition property of the shared multiple-access channel to sum signals. This property is particularly suited to the wireless MoE aggregating, since, when routed experts transmit their analog-modulated symbols concurrently, the channel can inherently superimpose them into a channel-weighted sum. Instead of transmitting experts' output over orthogonal uplink and fusing them at the cloud, over-the-air aggregation allows all selected experts to transmit output simultaneously and computes their sum by the channel itself. Consequently, both the cost and the latency of the fusion become independent of the number of activated experts.

\begin{figure*}[tp]
\includegraphics[width=\linewidth,height=0.5\linewidth]{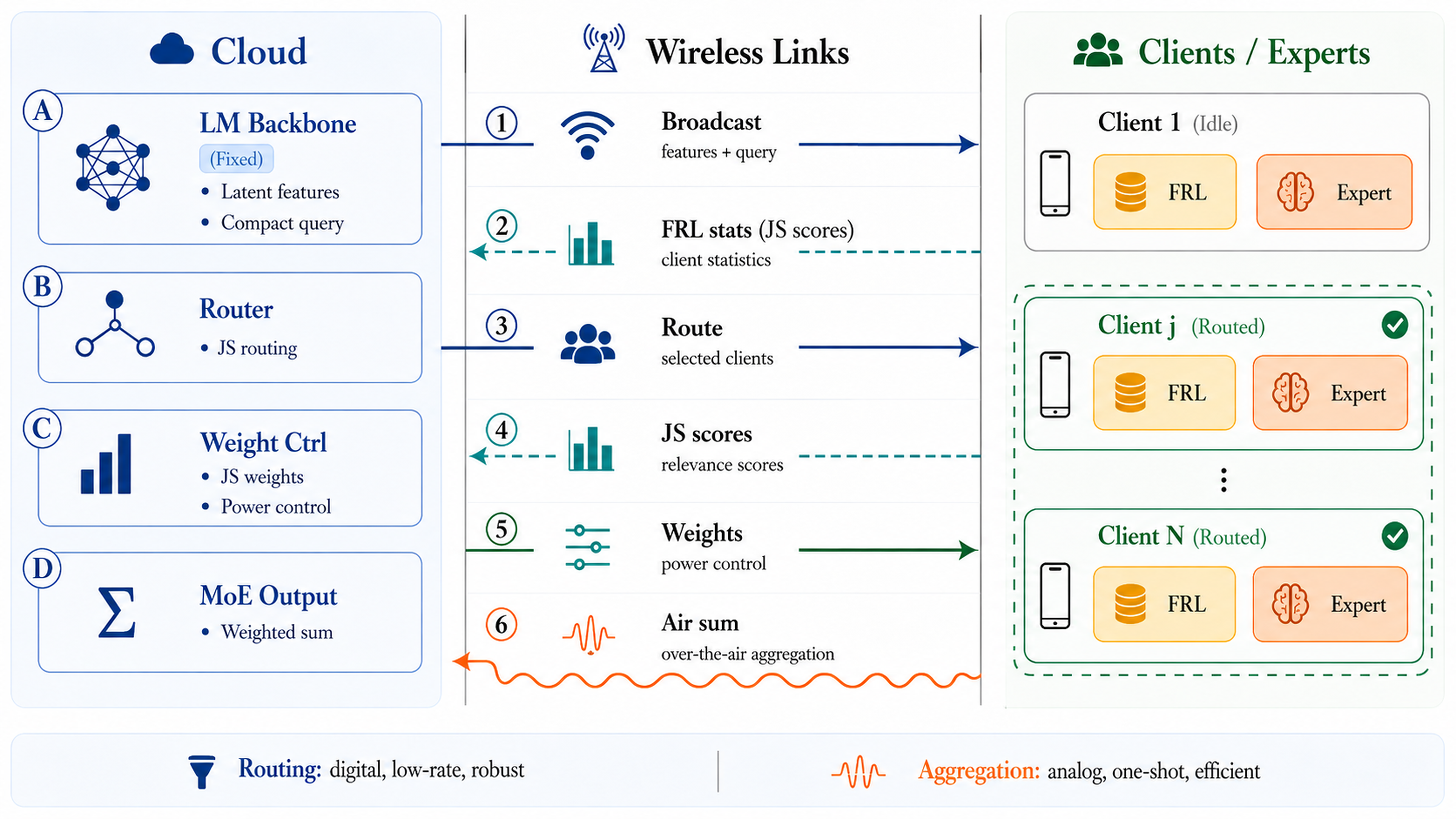}
\centering
\vspace{-0.8cm}
\caption{Overview of the proposed AirMoE.}
\label{fig:AirMoE-overview}
\vspace{-0.3cm}
\end{figure*}

This fundamental dichotomy between wireless MoE's routing and aggregating motivates decoupled design. We address them with individual mechanisms tailored to each characteristic. Specifically, we propose \textbf{AirMoE}, a statistic-augmented, routing-aggregating-decoupled over-the-air MoE paradigm. The framework comprises the following integral components. (i) Cloud-side Feature Extraction and Compact Query Generation: the cloud deploys a pretrained LM backbone to extract latent features for each input and summarizes it into a compact query. Then, the LM-extracted latent features and the summarized query are broadcast to all clients. At client side, the received latent features are used to compute client-wise intermediate features, while the received compact query is used to retrieve clients' expertise. (ii) Client-Wise Feature Retrieval Library (FRL): each client equips the hosted expert with a FRL which stores prototypical feature patterns that encode the client's expertise obtained from its historical experiences. The FRL is maintained by a read-then-update fashion using attention rule based on client-side hidden features. (iii) Statistic-Augmented Routing Mechanism (MoE-RM): as mentioned before, the cloud broadcasts the summarized compact query to all clients and each client locally retrieves a prototype-induced statistic from its equipped FRL based on the cloud query. Subsequently, each client reports the retrieved statistic over a small digital uplink to the cloud. After receiving statistics from all clients, the cloud aligns these clients' statistics with the summarized compact query via Jensen--Shannon (JS) divergence, which is used by the cloud to select the most aligned clients. Finally, the cloud feeds back which clients are routed. (iv) Over-the-Air Aggregating Mechanism (MoE-AM): when receiving the activated feedback from the cloud, each routed client compresses the cloud-broadcast latent features to obtain its own hidden features. Subsequently, it calculates and sends the JS divergence of its hidden features' statistics against the cloud query to the cloud. After receiving all routed clients' JS, the cloud calculates each routed client's aggregation weight based on those JSs, and then sends back the calculated weight to each routed client. Upon receiving their aggregation weights, the routed clients simultaneously transmit power-scaled expert outputs over the multiple-access channel, which physically computes the weighted MoE aggregation via AirComp. In a nutshell, AirMoE decouples routing from aggregating both algorithmically and physically: routing is digital, low-rate, and robust, whereas aggregating is analog, one-shot, and bandwidth efficient. 

In summary, the proposed AirMoE is illustrated in \Cref{fig:AirMoE-overview}, and the main contributions are highlighted as follows: 

\begin{itemize}
\item We formulate MoE over a cloud--edge wireless network, and identify routing and aggregating as the two communication bottlenecks. We propose AirMoE to address these bottlenecks from a statistical perspective by integrating the following integral components: cloud-side feature extraction and compact query generation, client-wise FRL, a statistic-augmented routing mechanism (MoE-RM), and an over-the-air aggregating mechanism (MoE-AM).

\item Client-wise FRL is updated in a read-then-update manner to store each client's expertise as compact prototypes. MoE-RM uses prototype-induced statistics instead of raw features to select clients to reduce uplink saturation, and MoE-AM exploits physical waveform superposition to compute the expert fusion directly over the channel via power control, achieving latency and bandwidth that are invariant to the number of activated clients.

\item We provide theoretical guarantees for AirMoE, including convergence analysis and $\epsilon$-stationarity complexity, in over-the-air setting based on reasonable assumptions.

\item Taking semantic segmentation task as an example, extensive experiments show that AirMoE outperforms MoE and single-model  baselines, and ablations further confirm the effectiveness of the incorporated components.
\end{itemize}

The remainder of this paper proceeds as follows. \Cref{sec:related} reviews related works. \Cref{sec:formulation} formulates the wireless MoE over cloud--edge setting. \Cref{sec:methodology} details the proposed AirMoE paradigm. \Cref{sec:conv} provides theoretical guarantees for the proposed AirMoE. \Cref{sec:experiments} presents experiments and ablations. \Cref{sec:conclusions} concludes this paper.

\section{Related Work} \label{sec:related}

\subsection{Distributed and Edge Foundation Models}
Splitting foundation models over a cloud--edge setting has been studied under split inference and collaborative edge intelligence \cite{11594946,ref:edgeai2,11592738,11535298,11062610}, where an early stage of a deep network runs on the cloud server and a later stage runs on the edge device, or vice versa. Recent works refined this paradigm along several axes. For example, split computing frameworks introduce learnable bottleneck layers to compress the intermediate features transmitted across the split point \cite{ref:bottlenet,ref:splitcompute}; early-exit and adaptive-depth architectures allow inference to terminate at intermediate stages to trade accuracy for latency under fluctuating channel conditions \cite{ref:branchynet}; and device--edge co-inference schemes jointly optimize the partition point, feature compression, and resource allocation to meet end-to-end latency and energy budgets \cite{ref:coinference,ref:jointdnn}. These studies, however, predominantly assume a monolithic network that is merely partitioned across nodes, leaving the deployment of heterogeneous, specialized sub-models on distributed devices largely unexplored. Furthermore, these studies largely overlook the communication constraints. In parallel, recent efforts on edge foundation models pursue model compression, quantization, and parameter-efficient adaptation to fit foundation-scale backbones within constrained hardware \cite{ref:edgefm,ref:tinyllm,Leitrans}. Meanwhile, other parameter-efficient fine-tuning techniques such as low-rank adapters \cite{ref:lora} make it feasible to specialize a frozen backbone with a small number of trainable parameters. 

Our work differs from the above in three ways: (i) we keep the foundation-scale backbone fixed at the cloud and place the adaptation on parallel, distributed MoE experts across devices; (ii) we explicitly account for the wireless links that connect the cloud backbone to the parallel, distributed edge experts, treating communication not as an afterthought but as a core design constraint; and (iii) we decouple the MoE routing and aggregating mechanisms from the underlying wireless channel.

\subsection{MoE and Its Distributed Variant}
MoE has become a pivotal method for scaling models and enhancing task performance \cite{11320560,ref:moesurvey}, and its strength lies in harnessing the specialization of individual experts across diverse data \cite{ref:scalingvit,fan2022m3vit}. The modern sparsely-gated MoE was popularized by \cite{shazeer2017outrageously}, which demonstrated that conditional computation could increase model capacity while keeping per-token compute nearly constant. Subsequent efforts simplified and stabilized the routing mechanism. For instance, Switch Transformers \cite{ref:switch} route each token to a single expert to reduce communication and computation; GShard \cite{ref:gshard} introduced automatic sharding primitives to enable MoE models to scale to billions of parameters across thousands of devices. In addition, GLaM \cite{ref:glam} showed that MoE language models can match or exceed dense counterparts at a fraction of the training and inference cost, and recent open MoE systems such as Mixtral \cite{ref:mixtral} and DeepSeekMoE \cite{ref:deepseekmoe} have further refined expert granularity and load balancing for practical deployment. A primary difficulty in MoE training is balancing expert utilization to avoid load collapse, where a small subset of experts dominates routing \cite{ref:baselayers,ref:expertchoice}. To this end, BASE Layers \cite{ref:baselayers} cast token-to-expert assignment as a balanced linear assignment problem, while Expert Choice routing \cite{ref:expertchoice} inverts the selection so that experts choose tokens, guaranteeing balanced loads. 

Emerging works place experts on different machines or accelerators for capacity scaling \cite{ref:gshard,ref:tutel,li2025theory}. Systems such as Tutel \cite{ref:tutel} provide optimized expert-parallel runtimes, and dedicated communication-scheduling techniques \cite{ref:lina} have been proposed to mitigate the all-to-all bottleneck that dominates distributed MoE execution. However, most of these works assume high-bandwidth, reliable interconnects, and the all-to-all dispatch and aggregation operations are designed under the assumption of essentially lossless, symmetric links. Our considered setting, where experts are hosted at heterogeneous wireless clients and connected to the cloud through noisy and bandwidth-limited links, introduces routing and aggregating challenges that are absent under above the ideal assumptions, and that are the focus of this paper.

\subsection{Over-the-Air Computing (AirComp)}
AirComp leverages the signal-superposition property of the multiple-access channel to compute the weighted sum of distributed data directly during transmission. The information-theoretic foundations of this idea trace back to the study of reliable computation over multiple-access channels \cite{ref:nazergastpar}, where it was shown that uncoded analog transmission can be optimal for estimating functions of correlated sources, in stark contrast to the separation-based design philosophy of digital communication. AirComp has been extensively applied to fast wireless data aggregation and to federated learning (FL) \cite{10907793,10342134,10802073,11246711}, where the global model update is a sum over clients \cite{ref:broadbandagg,ref:airsgd,ref:aircomp}. In the FL setting, over-the-air aggregation has been shown to dramatically reduce communication latency relative to orthogonal multiple access \cite{ref:broadbandagg}, and convergence guarantees have been established even in the presence of channel fading and additive noise \cite{ref:gradientchannel}. Subsequent works have refined this paradigm through joint device-selection and beamforming \cite{ref:airfl}. Power control and signal misalignment have likewise been identified as central design governing the accuracy of the aggregated estimate \cite{ref:powercontrol}.

We are, to our knowledge, the first to map AirComp onto the wireless MoE aggregation, i.e., the expert fusion weights become transmit-power-control coefficients, and the channel performs the MoE fusion. This view also reframes expert sparsity as a communication budget, where the TopK routing not only preserves specialization but also bounds the number of simultaneous transmitters.

\subsection{MoE Routing and Aggregating}
MoE routing is generally achieved by a gating function that assigns inputs to preferred experts. The dominant paradigm remains sparse TopK gating, first introduced in the sparsely-gated MoE layer \cite{shazeer2017outrageously} and later streamlined to single-expert routing in Switch Transformers \cite{ref:switch} to minimize routing overhead. Linear (e.g., softmax) gating \cite{ref:scalingvit,fan2022m3vit} is simple and common, while nonlinear alternatives such as cosine-similarity gating \cite{ref:perturbcosine} project inputs onto a hypersphere and compare to expert embeddings. A recurring concern with sparse gating is training instability and load imbalance, which has motivated auxiliary load-balancing losses \cite{ref:switch}, assignment-based formulations that enforce balanced routing by construction \cite{ref:baselayers,ref:expertchoice}, and stochastic or noise-injected routers that improve exploration over experts \cite{ref:stablemoe}. MoE aggregation typically combines selected experts by weighted sum based on gating-derived weights. Beyond simple weighted sums, recent studies revisit the aggregation mechanism. For instance, hierarchical and multi-level routing structures organize experts into groups before fusion \cite{ref:hashlayers}; analyses of expert specialization show that the quality of the aggregated output depends critically on how routing decisions and combination weights interact \cite{ref:deepseekmoe}. The statistical behavior of these gating and aggregation functions has also been studied for expert estimation \cite{ref:perturbcosine}. 

In this work, we separate routing and aggregating from a statistical perspective, and further decouple their communication modalities, aiming to enhance the MoE overall performance under wireless constraints.

\section{Over-the-Air MoE Formulation}
\label{sec:formulation}

We formulate the wirelessly connected MoE in this section. This formulation makes explicit the two coupled constraints (i.e., uplink routing cost and wireless aggregation scalability) that motivate the proposed AirMoE in \Cref{sec:methodology}. Related key notations are summarized in \Cref{tab:notation}.

\subsection{Network and Computation Setting}
We consider a cloud--edge MoE system consisting of one cloud server and $N$ distributed clients indexed by $j\in\{1,\dots,N\}$, connected over a shared wireless multiple-access channel. The cloud hosts a pretrained LM backbone with parameters $\omega_{\mathrm{LM}}$, while client $j$ hosts an expert $E_j$ with parameters $\theta_j=\theta^{\mathrm{enc}}_j\cup\theta^{\mathrm{dec}}_j$. The collection of clients $\{E_j\}_{j=1}^{N}$ forms a heterogeneous expert pool, where each $E_j$ is specialized on a subset $D_j\subseteq D$ of the data distribution (e.g., a particular weather regime, illumination, or road type), so that $D=\bigcup_{j=1}^{N}D_j$ can reflect the significantly heterogeneous working scenarios of real-world applications. This heterogeneity is precisely what a single centralized model fails to capture, and what the distributed MoE heads are intended to exploit.

\begin{table}[t]
\centering
\setlength{\tabcolsep}{2.5pt}
\caption{Key Notations in the proposed AirMoE}
\label{tab:notation}
\renewcommand{\arraystretch}{1.15}
\begin{tabularx}{\linewidth}{lX}
\toprule
Notation & Explanation \\
\midrule
$D^{(i)}$ & The $i$-th input image of dataset $D$. \\
$\omega_{\mathrm{LM}}$ & Parameters of the cloud LM backbone. \\
$\mathcal{F}^{(i)}_{\mathrm{LM}}$ & LM-extracted latent feature of $D^{(i)}$. \\
$q^{(i)},Q^{(i)}$ & Cloud-side compact query of $\mathcal{F}^{(i)}_{\mathrm{LM}}$ and its normalization. \\
$N$ & Number of clients/experts. \\
$E_j,\theta_j$ & Expert at client $j$ and its parameters. \\
$h^{(i)}_j,\,y^{(i)}_j$ & Intermediate rep.\ and output of expert $j$. \\
$M_j$ & Feature Retrieval Library (FRL) of client $j$. \\
$p_{j,k},w_{j,k}$ & $k$-th prototype and its importance weight in $M_j$. \\
$\alpha^{(i)}_{j,k}$ & Attention weight on $p_{j,k}$ for input $i$. \\
$\tilde{p}^{(i)}_j,P^{(i)}_j$ & Retrieved prototype of client $j$ and its normalization. \\
$\eta$ & FRL memory update rate. \\
$R^{(i)}_j$ & Output-induced distributions. \\
$s^{(i)}_j,\pi^{(i)}_j$ & Routing score and probability of client $j$. \\
$\mathcal{S}^{(i)}$ & Index set of TopK selected clients. \\
$\delta^{(i)}_j$ & JS divergence between $Q^{(i)}$ and $R^{(i)}_j$. \\
$\beta^{(i)}_j$ & Aggregation weight of client $j$. \\
$\gamma_j$ & Uplink channel coefficient of client $j$. \\
$b_j^{(i)}$ & Input-dependent transmit pre-scaling (power control) of client $j$. \\
$\sigma_c^2$ & Complex receiver-noise power at the cloud. \\
$\hat{y}^{(i)}$ & Over-the-air aggregated output for input $i$. \\
\bottomrule
\end{tabularx}
\end{table}

\subsection{Distributed MoE Routing and Aggregating}
For the $i$-th input $D^{(i)}$, the cloud first extracts a shared latent feature $\mathcal{F}^{(i)}_{\mathrm{LM}}$. Based on $\mathcal{F}^{(i)}_{\mathrm{LM}}$, a routing operator $\mathcal{R}$ selects a subset of clients via
\begin{equation}
\!\mathcal{S}^{(i)} \!=\! \mathcal{R}\!\left(\mathcal{F}^{(i)}_{\mathrm{LM}};\{M_j^{(i)}\}_{j=1}^{N}\right)
\subseteq\{1,\dots,N\}, \quad |\mathcal{S}^{(i)}|\!=\!K,
\label{eq:pf_route}
\end{equation}
where $K\ll N$ is the number of routed clients by TopK and $M_j^{(i)}$ denotes any client-side state of input $D^{(i)}$ used by routing. Meanwhile, an aggregation operator $\mathcal{A}$ fuses the selected outputs $\{y_j^{(i)}\}\ (\text{where}\ j \in \mathcal{S}^{(i)})$ into the final prediction $\hat{y}^{(i)}$ through
\begin{equation}
\!\hat{y}^{(i)} \!=\! \mathcal{A}\!\left(\{y^{(i)}_j\}_{j\in\mathcal{S}^{(i)}}\right)
\!=\! \sum_{j\in\mathcal{S}^{(i)}}\!\beta^{(i)}_j y^{(i)}_j,
\quad \sum_{j\in\mathcal{S}^{(i)}}\!\beta^{(i)}_j\!=\!1,
\label{eq:pf_agg}
\end{equation}
where $\{\beta^{(i)}_j\}_{j \in\mathcal{S}^{(i)}}$ are the nonnegative fusion weights. \Cref{eq:pf_agg} exposes the structural property of MoE aggregation, which is a weighted sum and computable over a multiple-access channel.

\subsection{Communication Model}
Realizing \Cref{eq:pf_route}--\Cref{eq:pf_agg} over the wireless network incurs two distinct communication challenges.

\textbf{Prohibitive routing uplink.} To evaluate the routing operator $\mathcal{R}$, for each client $j$ ($j \in \{1, \cdots, N\}$), it must report some descriptor $M^{(i)}_j$ to the cloud over a bandwidth-constrained digital uplink. Let $b(M^{(i)}_j)$ denote its size. The total routing cost per input is
\begin{equation}
C_{\mathrm{route}}^{(i)} = \sum\nolimits_{j=1}^{N} b\!\left(M^{(i)}_j\right).
\label{eq:pf_routecost}
\end{equation}
A naive choice of $M^{(i)}_j$ is the client-side latent features, which makes $C_{\mathrm{route}}^{(i)}$ prohibitive for bandwidth-limited uplink. 

\textbf{Poor scalability of digitally wireless aggregation.} Given $\mathcal{S}^{(i)}$, the selected clients transmit their outputs simultaneously with input-dependent pre-scaling $b_j^{(i)}\in\mathbb{C}$ under the per-client power budget $\mathbb{E}\,\|b_j^{(i)} y^{(i)}_j\|_2^2\le P_0$. Through the uplink channels $\gamma_j\in\mathbb{C}$ and receiver noise $n\sim\mathcal{CN}(0,\sigma_c^2\mathbf{I})$, the cloud-aggregated signal is
\begin{equation}
r^{(i)} = \sum\nolimits_{j\in\mathcal{S}^{(i)}}\gamma_j\,{b_j^{(i)}}\,y^{(i)}_j + n.
\label{eq:pf_rx}
\end{equation}
For the conventional orthogonal digital transmission (e.g., TDMA), the uplink cost and latency scale linearly with the number of activated experts, and becomes non-negligible as the number of the activated clients increases. In addition, realizing \Cref{eq:pf_rx} digitally introduces aggregation distortion. The aggregation distortion can be formulated as
\begin{equation}
\mathcal{E}^{(i)}_{\mathrm{agg}}
= \mathbb{E}\big\|\hat{y}^{(i)}_{\mathrm{ota}} - \hat{y}^{(i)}\big\|_2^2,
\label{eq:pf_distortion}
\end{equation}
where $\hat{y}^{(i)}_{\mathrm{ota}}$ is an estimate of the target aggregation $\hat{y}^{(i)}$. This distortion depends jointly on $\{b_j^{(i)}\}$, $\{\gamma_j\}$, and $\sigma_c^2$. 

\subsection{Optimization Objective of Wireless MoE}
The optimization goal is to jointly design the routing operator $\mathcal{R}$, the fusion weights $\{\beta^{(i)}_j\}$, the input-dependent transmit pre-scalings $\{b_j^{(i)}\}$, the positive receive scalings $\{\rho^{(i)}\}$, and the expert parameters $\{\theta_j\}$ so as to minimize the expected task loss $\mathcal{L}(\cdot)$ subject to the communication budget, physical-layer alignment, and power constraints, \ie,
\begin{align}
\!\!{\min_{\{\theta_j\},\,\mathcal{R},\,\{\beta^{(i)}_j\},\,\{b_j^{(i)}\},\,\{\rho^{(i)}\}}}
& \mathbb{E}_{i}\Big[\mathcal{L}\!\left(\hat{y}^{(i)}_{\mathrm{ota}},Y^{(i)}\right)\Big] \label{eq:pf_obj}\\
\text{s.t.}\quad
& \mathcal{S}^{(i)} = \mathcal{R}\!\left(\mathcal{F}^{(i)}_{\mathrm{LM}};\{M_j\}_{j=1}^N\right),\  \tag{\ref{eq:pf_obj}{a}}\\
& |\mathcal{S}^{(i)}|=K, \tag{\ref{eq:pf_obj}{b}}\\
& C_{\mathrm{route}}^{(i)} \le C_{\max}, \tag{\ref{eq:pf_obj}{c}}\\
& {\mathbb{E}\,\big\|b_j^{(i)}\,y^{(i)}_j\big\|_2^2 \le P_0, ~ j\in\mathcal{S}^{(i)}}, \tag{\ref{eq:pf_obj}{d}}\\
& {\gamma_j b_j^{(i)}=\sqrt{\rho^{(i)}}\,\beta_j^{(i)}, ~ j\in\mathcal{S}^{(i)}}, \tag{\ref{eq:pf_obj}{e}}\\
& \textstyle\sum_{j\in\mathcal{S}^{(i)}}\beta^{(i)}_j = 1,\ \beta^{(i)}_j\ge 0. \tag{\ref{eq:pf_obj}{f}}
\end{align}
Here, $C_{\max}$ is the routing uplink budget and $P_0$ is the per-client power budget. Constraint~(\ref{eq:pf_obj}{e}) explicitly couples the fusion weight and transmit pre-scaling so that, after receiver normalization by $1/\sqrt{\rho^{(i)}}$, the channel coefficient of $y_j^{(i)}$ equals the desired weight $\beta_j^{(i)}$. Since $\hat y_{\mathrm{ota}}^{(i)}$ is the argument of the task loss in \Cref{eq:pf_obj}, the aggregation distortion in \Cref{eq:pf_distortion} affects the objective implicitly rather than being imposed as a separate constraint. Problem~\Cref{eq:pf_obj} is challenging for three reasons: (i) the routing constraint~(\ref{eq:pf_obj}{a}) and ~(\ref{eq:pf_obj}{b}) are combinatorial due to the discrete TopK selection; (ii) the budget~(\ref{eq:pf_obj}{c}) forbids transporting raw features, therefore, $\mathcal{R}$ must operate on highly compressed descriptors; and (iii) the aggregation in the objective is computed through the noisy channel (\Cref{eq:pf_rx}), coupling the learning problem with the physical-layer design via the distortion (\Cref{eq:pf_distortion}).

These three difficulties motivate our decoupled design \textbf{AirMoE} in \Cref{sec:methodology}: a statistic-augmented digital routing mechanism that satisfies (\ref{eq:pf_obj}{a})--(\ref{eq:pf_obj}{c}) by reporting only compact prototype statistics, and a statistic-augmented over-the-air aggregating mechanism that realizes the weighted fusion (\Cref{eq:pf_agg}) directly through the channel (\Cref{eq:pf_rx}), while controlling the distortion (\Cref{eq:pf_distortion}) under the power constraint~(\ref{eq:pf_obj}{d}).

\begin{figure*}[tp]
\includegraphics[width=\linewidth]{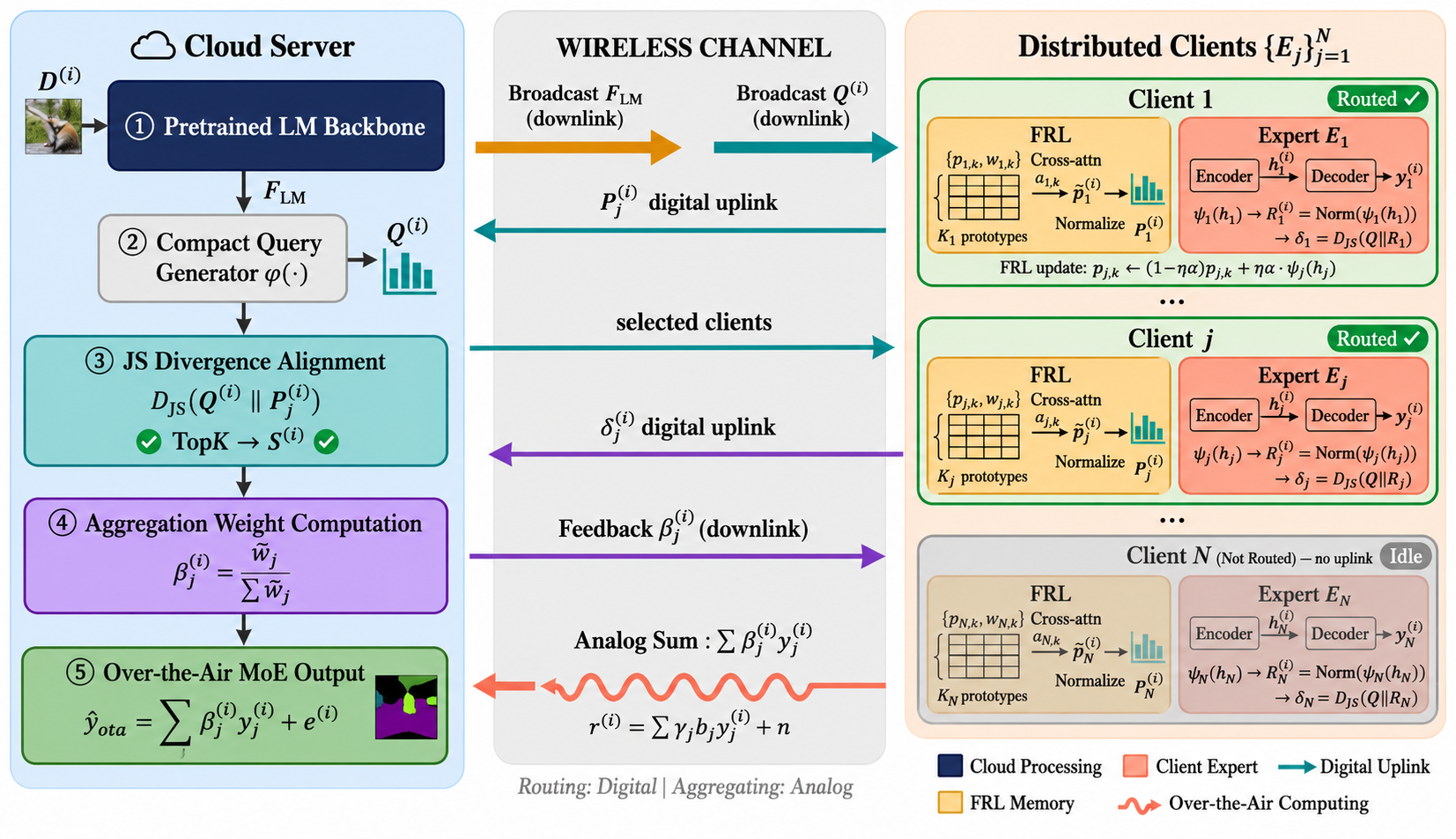}
\centering
\vspace{-1.2cm}
\caption{Detailed illustration of the proposed AirMoE.}
\label{fig:AirMoE-detail}
\vspace{-0.3cm}
\end{figure*}

\section{Methodology} \label{sec:methodology}

The proposed AirMoE consists of (i) cloud-side feature extraction and compact query generation, (ii) client-wise FRL, (iii) statistic-augmented routing mechanism (MoE-RM), and (iv) over-the-air aggregating mechanism (MoE-AM). These components are detailed in \Cref{sec:backbone} to \Cref{sec:moeam}. We then present the training objective of AirMoE in \Cref{sec:obj}. 

\subsection{Cloud-side Feature Extraction and Query Generation}
\label{sec:backbone}
As discussed previously, we consider a cloud server connected to $N$ distributed clients $\{1,\dots,N\}$ over a wireless network, where the cloud hosts a pretrained LM backbone with parameters $\omega_{\mathrm{LM}}$, and client $j$ hosts an expert $E_j$ with parameters $\theta_j$. For the $i$-th input $D^{(i)}$ of dataset $D$, the cloud extracts the latent features $\mathcal{F}^{(i)}_{\mathrm{LM}}$ in a zero-shot manner via
\begin{equation}
\mathcal{F}^{(i)}_{\mathrm{LM}} = \omega_{\mathrm{LM}}\!\left(D^{(i)}\right),
\label{eq:vit}
\end{equation}
where $\mathcal{F}^{(i)}_{\mathrm{LM}}$ is a tensor. To enable both routing and aggregating in a common, communication-friendly space, the cloud summarizes $\mathcal{F}^{(i)}_{\mathrm{LM}}$ into a compact prototype-space query by
\begin{equation}
q^{(i)} = \phi\!\left(\mathcal{F}^{(i)}_{\mathrm{LM}}\right) \in \mathbb{R}^{d},
\qquad
Q^{(i)} = \mathrm{Norm}\!\left(q^{(i)}\right),
\label{eq:query}
\end{equation}
where $\phi(\cdot)$ is a lightweight feature-to-vector map and $\mathrm{Norm}(\cdot)$ converts a vector into a normalized distribution. 

When the extracted features $\mathcal{F}^{(i)}_{\mathrm{LM}}$ and the compact query $Q^{(i)}$ are prepared, the cloud broadcasts them to all clients in a one-to-many manner via the downlink channel. After each client receives them, the extracted features $\mathcal{F}^{(i)}_{\mathrm{LM}}$ is adopted to compute the instant features for input $D^{(i)}$ at each client, and the compact query $Q^{(i)}$ is used to retrieve each client's expertise from client-wise FRL. 

\subsection{Client-Wise Feature Retrieval Library (FRL)}
\label{sec:frl}
Recall that wireless MoE's routing requires the cloud-queried, client-side state information as formalized in \Cref{eq:pf_route}. This client-side state can be its intermediate representation. However, transmitting intermediate representation from clients to the cloud leads to channel saturation and inference latency for bandwidth-constrained uplink owing to their costly size. 

To surmount such uplink challenges, instead of transmitting the raw states, we propose to transmit feature prototype-based expertise from each client to the cloud. To this end,  each client $j$ is proposed to equip its hosted expert $E_j$ with a FRL. The FRL stores prototypical patterns summarizing the client's specialization, which are obtained from its experiencing history in a read-then-update fashion. Specifically, FRL termed as $M_j$ contains $K_j$ entries of client-specific expertise, \ie, $M_j=\{(p_{j,k},w_{j,k})\}_{k=1}^{K_j}$, where $p_{j,k}\in\mathbb{R}^{d}$ are prototype vectors and $w_{j,k}\ge 0$ are importance weights. 

Given the cloud-broadcast query $Q^{(i)}$, client $j$ reads its FRL through cross attention, \ie,
\begin{align}
\alpha^{(i)}_{j,k} &= \frac{\exp\!\big(\mathrm{sim}(Q^{(i)},p_{j,k})\big)}
{\sum_{\ell=1}^{K_j}\exp\!\big(\mathrm{sim}(Q^{(i)},p_{j,\ell})\big)}, \label{eq:attn}\\
\tilde{p}^{(i)}_j &= \sum\nolimits_{k=1}^{K_j}\alpha^{(i)}_{j,k}\,p_{j,k}, \label{eq:read}
\end{align}
where $\mathrm{sim}(\cdot,\cdot)$ is a similarity metric in the shared $d$-dimensional space, thus $Q^{(i)}$ and $p_{j,k}$ are directly comparable and $\tilde{p}^{(i)}_j$ summarizes the most relevant FRL content for input $D^{(i)}$.

When client $j$ is activated and computes a forward pass (\Cref{sec:moeam}), its FRL is renewed in a read-then-update manner through
\begin{align}
p_{j,k} &\leftarrow (1-\eta\alpha^{(i)}_{j,k})\,p_{j,k}
        + \eta\alpha^{(i)}_{j,k}\,\psi_j\!\big(h^{(i)}_j\big), \label{eq:upd_p}\\
w_{j,k} &\leftarrow (1-\eta\alpha^{(i)}_{j,k})\,w_{j,k} + \eta\alpha^{(i)}_{j,k},
\label{eq:upd_w}
\end{align}
where $h^{(i)}_j$ is the client-side intermediate features for input $D^{(i)}$ (as shown in \Cref{eq:pf_hidden}), $\eta\in(0,1]$ is the memory update rate, and $\psi_j:\mathbb{R}^{d_h}\!\to\!\mathbb{R}^{d}$ is a learnable projection into the prototype space. The read-then-update order ensures that the current sample is routed against the pre-updated memory before its information is absorbed, avoiding trivial self-rewriting. Because the FRL resides and operates locally, this update just involves computation but incurs no communication.

\subsection{Statistic-Augmented Digital Routing (MoE-RM)}
\label{sec:moerm}
To select the most relevant clients without uploading raw features through uplink, we route on clients' compact statistics. Specifically, after each client $j$ retrieves its expertise $\tilde{p}^{(i)}_j$ based on the cloud query $Q^{(i)}$, it forms a prototype-induced distribution by normalizing its locally retrieved prototype via
\begin{equation}
P^{(i)}_j = \mathrm{Norm}\!\left(\tilde{p}^{(i)}_j\right),
\label{eq:Pj}
\end{equation}
and uploads it digitally to the cloud via uplink. Since $P^{(i)}_j$ is a low-dimensional normalized vector, this uplink report is much smaller than the size of client-side hidden feature $h^{(i)}_j$. At the cloud, we align $P^{(i)}_j$ with the input-induced cloud-side distribution $Q^{(i)}$ using the Jensen--Shannon (JS) divergence, \ie,
\begin{equation}
\!\!D_{\mathrm{JS}}\!\big(Q^{(i)}\,\|\,P^{(i)}_j\big)
\!=\! \tfrac{1}{2}\Big(D_{\mathrm{KL}}(Q^{(i)}\|I^{(i)}_j)
\!+\! D_{\mathrm{KL}}(P^{(i)}_j\|I^{(i)}_j)\Big),
\label{eq:js_route}
\end{equation}
where $I^{(i)}_j=\tfrac{1}{2}(Q^{(i)}+P^{(i)}_j)$. We prefer JS divergence over raw KL divergence or cosine similarity, because it not only is symmetric and bounded, but also compares two normalized distributions through a shared midpoint, making the routing score less sensitive to scale mismatch across heterogeneous clients. Routing scores and probabilities are computed through
\begin{align}
s^{(i)}_j &= \frac{1}{\epsilon + D_{\mathrm{JS}}(Q^{(i)}\|P^{(i)}_j)}, \label{eq:score}\\
\pi^{(i)}_j &= \frac{\exp(\tau s^{(i)}_j)}{\sum_{\ell=1}^{N}\exp(\tau s^{(i)}_\ell)},
\label{eq:prob}
\end{align}
with stability constant $\epsilon>0$ and temperature $\tau>0$. The cloud applies TopK to $\{\pi^{(i)}_j\}_{j=1}^{N}$, producing the activated client set $\mathcal{S}^{(i)}\subseteq\{1,\dots,N\}$. Only clients in $\mathcal{S}^{(i)}$ proceed to the expert forward pass and following over-the-air MoE aggregation, therefore, TopK controls both expert sparsity and the number of simultaneous uplink transmitters (i.e., the involved communication overheads).

\subsection{Over-the-Air Statistic-Augmented Aggregating (MoE-AM)}
\label{sec:moeam}
In general, for input $D^{(i)}$, only each activated client $j\in\mathcal{S}^{(i)}$ produces its client-side hidden representation $h^{(i)}_j$ and task output $y^{(i)}_j$ via 
\begin{align}
h^{(i)}_j &= E^{\mathrm{enc}}_j\!\big(\mathcal{F}^{(i)}_{\mathrm{LM}};\theta^{\mathrm{enc}}_j\big), \label{eq:pf_hidden} \\
y^{(i)}_j &= E^{\mathrm{dec}}_j\!\big(h^{(i)}_j;\theta^{\mathrm{dec}}_j\big),
\label{eq:pf_expert}
\end{align}
where $\mathcal{F}^{(i)}_{\mathrm{LM}}$ is the cloud LM-extracted latent features that are broadcast by the cloud to all clients, and $E_j=E^{\mathrm{dec}}_j\circ E^{\mathrm{enc}}_j$ represents the expert that resides at client $j$. In contrast, clients that are not in $\mathcal{S}^{(i)}$ remain dummy in processing input $D^{(i)}$.

\textbf{Statistical fusion weights.} To weigh experts by how well their local features agree with the LM-extracted features, each client computes a client-side latent feature-induced distribution $R^{(i)}_j$ and its JS distance to $Q^{(i)}$ through
\begin{align}
R^{(i)}_j &= \mathrm{Norm}\!\big(\psi_j(h^{(i)}_j)\big), \quad j\in\mathcal{S}^{(i)},\label{eq:Rj}\\
\delta^{(i)}_j &= D_{\mathrm{JS}}\!\big(Q^{(i)}\,\|\,R^{(i)}_j\big),\quad j\in\mathcal{S}^{(i)}. \label{eq:delta}
\end{align}
This process generates a small number of statistics $\delta^{(i)}_j$ that each activated client needs to report to the cloud to calculate its aggregation weight. Specifically, at the cloud side, the aggregation weights are reciprocal-distance, normalized over the activated set $\mathcal{S}^{(i)}$, \ie,
\begin{equation}
\tilde{w}^{(i)}_j = \frac{1}{\epsilon+\delta^{(i)}_j},
\qquad
\beta^{(i)}_j = \frac{\tilde{w}^{(i)}_j}{\sum_{\ell\in\mathcal{S}^{(i)}}\tilde{w}^{(i)}_\ell},
\quad j\in\mathcal{S}^{(i)}.
\label{eq:beta}
\end{equation}
After these aggregation weights are already calculated, the cloud feeds back the scalar $\beta^{(i)}_j$ to each activated client with a small downlink overhead. Each activated client $j$ uses its individual weight $\beta^{(i)}_j$ to control its transmission power for over-the-air computing the MoE aggregation. 

Notably, using the same normalized comparison space for $Q^{(i)}$, $P^{(i)}_j$, and $R^{(i)}_j$ keeps routing and aggregating decoupled while remaining comparable at the distribution level.

\textbf{Over-the-air computing the MoE aggregation.} The target fusion of MoE is the weighted sum $\hat{y}^{(i)}$ through \Cref{eq:pf_agg}. Each activated client transmits its analog-modulated output $y^{(i)}_j$ pre-scaled by $b_j^{(i)}$ via channel $\gamma_j$. We realize the MoE summation $r^{(i)}$ physically via waveform superposition as \Cref{eq:pf_rx}, and combine channel-inversion power control (through $b_j^{(i)}$ and $\gamma_j$) with the fusion weight $\beta^{(i)}_j$ via
\begin{equation}
{b_j^{(i)} = \frac{\sqrt{\rho^{(i)}}\,\beta^{(i)}_j}{\gamma_j},
\qquad
\rho^{(i)} = \min_{j\in\mathcal{S}^{(i)}}\frac{P_0\,|\gamma_j|^2}{(\beta^{(i)}_j)^2\,\mathbb{E}\|y^{(i)}_j\|_2^2},}
\label{eq:pc}
\end{equation}
where $P_0$ is the per-client power budget and $\rho^{(i)}$ is the largest receive scaling that satisfies the vector power constraint~(\ref{eq:pf_obj}{d}). The first equality is exactly the physical-layer alignment constraint~(\ref{eq:pf_obj}{e}); hence, it prevents $b_j^{(i)}$ and $\beta_j^{(i)}$ from being optimized independently. Substituting \Cref{eq:pc} into \Cref{eq:pf_rx}, coherently projecting the received signal onto its real component, and applying receiver scaling $1/\sqrt{\rho^{(i)}}$ yield the over-the-air estimate of the MoE aggregation as
\begin{equation}
{\hat{y}^{(i)}_{\mathrm{ota}} = \frac{\operatorname{Re}\{r^{(i)}\}}{\sqrt{\rho^{(i)}}}
= \underbrace{\sum\nolimits_{j\in\mathcal{S}^{(i)}}\beta^{(i)}_j\,y^{(i)}_j}_{\hat{y}^{(i)}}
+ \underbrace{\frac{\operatorname{Re}\{n\}}{\sqrt{\rho^{(i)}}}}_{\text{aggregation noise } e^{(i)}}}.
\label{eq:agg_ota}
\end{equation}
This indicates that the channel computes the statistically reweighted fusion in a single transmission, perturbed by zero-mean real-equivalent noise $e^{(i)}$ with covariance $(\sigma_c^2/(2\rho^{(i)}))\mathbf{I}$. Moreover, maximizing $\rho^{(i)}$ under~(\ref{eq:pf_obj}{d}) minimizes the resulting noise-induced distortion in \Cref{eq:pf_distortion}. Unlike orthogonal digital transmission, the over-the-air scheme in \Cref{eq:agg_ota} uses a single channel regardless of $|\mathcal{S}^{(i)}|$, its latency and bandwidth therefore are invariant to the number of activated experts.

To handle clients in deep fade (where small $|\gamma_j|$ would force $\rho^{(i)}\!\to\!0$ and amplify noise), we adopt truncated channel inversion. Specifically, clients with $|\gamma_j|^2<\gamma_{\mathrm{th}}$ are pruned from $\mathcal{S}^{(i)}$ before $\rho^{(i)}$ is computed. This truncation couples cleanly with MoE-RM, because weak experts are already unlikely to be routed, the truncation therefore rarely removes high-relevance experts.

\subsection{Overall Training Objective of AirMoE}
\label{sec:obj}
The proposed AirMoE is trained in an end-to-end manner with the over-the-air estimated output $\hat{y}^{(i)}_{\mathrm{ota}}$ (i.e., the channel is considered in the forward graph). Taking semantic segmentation as the example task, the total objective $\mathcal{L}$ combines the cross-entropy loss $\mathcal{L}_{\mathrm{CE}}$, the MoE load-balancing term $\mathcal{L}_{\mathrm{LB}}$, and the FRL regularizer $\mathcal{L}_{\mathrm{FRL}}$, \ie,
\begin{equation}
\mathcal{L} = \mathcal{L}_{\mathrm{CE}} + \lambda_{\mathrm{LB}}\mathcal{L}_{\mathrm{LB}}
+ \lambda_{\mathrm{FRL}}\mathcal{L}_{\mathrm{FRL}},
\label{eq:loss}
\end{equation}
where $\lambda_{\text{LB}}, \lambda_{\text{FRL}}\ge 0$ control regularization strengths. Among these terms, $\mathcal{L}_{\text{LB}}$ follows the standard auxiliary load-balancing regularization widely used in MoE training to encourage more uniform expert utilization~\cite{shazeer2017outrageously,ref:switch}, whereas $\mathcal{L}_{\text{FRL}}$ is a task-specific regularizer introduced in this paper to keep the FRL memory compact and stable. 

In the proposed AirMoE, term $\mathcal{L}_{\text{LB}}$ is introduced to encourage balanced expert usage during training, and is defined as 
\begin{equation}
\mathcal{L}_{\mathrm{LB}} = \sum\nolimits_{j=1}^{N} u_j\log u_j - \log N,
\quad u_j=\tfrac{1}{B}\sum\nolimits_{i=1}^{B}\pi^{(i)}_j,
\label{eq:lb}
\end{equation}
where $B$ is the batch size. In this formulation, $\sum\nolimits_{j=1}^{N} u_j \log u_j$ (minimized when $\{u_j\}_{j=1}^N$ is uniform) pushes inputs to be distributed across experts and thereby reduces the risk of expert collapse. Adding $-\log N$ zero-centers the objective at the uniform optimum.

The FRL regularizer keeps client memories compact and stable, \ie,
\begin{equation}
\mathcal{L}_{\mathrm{FRL}} = \sum_{j=1}^{N}\sum_{k=1}^{K_j}\Big(
\underbrace{\|p_{j,k}\|_2^2 + w_{j,k}^2}_{\text{norm/weight decay}}
+ \underbrace{\sum\nolimits_{i=1}^{B}|\alpha^{(i)}_{j,k}|}_{\text{sparse attention}}\Big),
\label{eq:frl}
\end{equation}
where norm/weight decay term prevents unbounded growth of prototype vectors and their importance scalars, controlling scale and avoiding trivial wins by increasing magnitude. Sparse attention term over prototypes encourages retrieving a few relevant prototypes rather than averaging many, which keeps retrieved statistic $P^{(i)}_j$ informative and interpretable.

In summary, AirMoE is illustrated in \Cref{fig:AirMoE-detail} and outlined in \Cref{alg:airmoe}.

\begin{algorithm}[t]
\caption{AirMoE}
\label{alg:airmoe}
\DontPrintSemicolon
\SetKwInOut{Input}{Input}
\SetKwInOut{Output}{Output}
\SetKwInOut{Init}{Init}

\Input{Dataset $D$; cloud LM $\omega_{\mathrm{LM}}$; experts $\{E_j\}_{j=1}^{N}$ with $\{\theta_j\}_{j=1}^{N}$; $K,\tau,\epsilon,\eta,\gamma_{\mathrm{th}},P_0,\lambda_{\mathrm{LB}},\lambda_{\mathrm{FRL}}$.}
\Init{FRLs $M_j=\{(p_{j,k},w_{j,k})\}_{k=1}^{K_j}$, $\forall j$.}
\Output{Trained $\{\theta_j\}$, updated FRLs.}
\BlankLine

\For{\textnormal{each mini-batch} $\{D^{(i)}\}_{i=1}^{B}$}{
  \ForEach{\textnormal{input} $D^{(i)}$}{
    \tcp{\textbf{Cloud feature extraction \& query generation}}
    Extract $\mathcal{F}^{(i)}_{\mathrm{LM}}$, form $Q^{(i)}$ by \Cref{eq:vit,eq:query};
    broadcast $Q^{(i)},\mathcal{F}^{(i)}_{\mathrm{LM}}$\;
    \BlankLine

    \tcp{\textbf{Client-wise FRL read}}
    \ForEach{\textnormal{client} $j\in\{1,\dots,N\}$}{
      Get $\alpha^{(i)}_{j,k},\tilde{p}^{(i)}_j$ by \Cref{eq:attn,eq:read}\;
      Form $P^{(i)}_j$ by \Cref{eq:Pj}; upload digitally\;
    }
    \BlankLine

    \tcp{\textbf{Statistic-augmented routing}}
    Compute $s^{(i)}_j,\pi^{(i)}_j$ by \Cref{eq:js_route,eq:score,eq:prob}\;
    Select $\mathcal{S}^{(i)}\leftarrow\textnormal{Top}K(\{\pi^{(i)}_j\}_{j=1}^{N})$\;
    \BlankLine

    \tcp{\textbf{Expert forward pass, $j\in\mathcal{S}^{(i)}$}}
    Compute $h^{(i)}_j,y^{(i)}_j$ by \Cref{eq:pf_hidden,eq:pf_expert};
    form $R^{(i)}_j,\delta^{(i)}_j$ by \Cref{eq:Rj,eq:delta}; upload $\delta^{(i)}_j$ digitally
    \BlankLine

    \tcp{\textbf{Over-the-air aggregation}}
    Compute $\beta^{(i)}_j$ by \Cref{eq:beta}; feed back to clients\;
    Set $b_j^{(i)},\rho^{(i)}$ by \Cref{eq:pc}\;
    Simultaneous transmission $\Rightarrow r^{(i)}$ by \Cref{eq:pf_rx};
    estimate $\hat{y}^{(i)}_{\mathrm{ota}}$ by \Cref{eq:agg_ota}\;
    \BlankLine

    \tcp{\textbf{FRL renewal (local, $j\in\mathcal{S}^{(i)}$)}}
    Update $p_{j,k},w_{j,k}$ by \Cref{eq:upd_p,eq:upd_w}\;
  }
  \BlankLine

  \tcp{\textbf{End-to-end optimization}}
  Form total loss $\mathcal{L}$ by \Cref{eq:loss};
  update $\{\theta_j\}$ via $\nabla\mathcal{L}$\;
}
\end{algorithm}

\section{Theoretical Guarantees under Channel Noise}
\label{sec:conv}
This section provides theoretical guarantees of AirMoE under channel noise. Because TopK routing makes the system piecewise smooth, we analyze a smooth surrogate following classical nonconvex stochastic approximation. The analysis applies either to a soft-routing relaxation of TopK or to any local region where the activated set $\mathcal{S}^{(i)}$ is fixed. The key new ingredient relative to a single-machine MoE is the over-the-air aggregation noise $e^{(i)}$ in \Cref{eq:agg_ota}.

\subsection{Assumptions}

\begin{assumption}[Smooth surrogate]
\label{as:smooth}
Let $\widetilde{\mathcal{L}}(\Theta,\mathcal{M})=\mathbb{E}_i\big[\ell(\hat y^{(i)}(\Theta),Y^{(i)})\big]$ be the noiseless, relaxed objective, where $\Theta$ collects all trainable parameters, $\ell(\cdot,Y^{(i)})$ is the per-sample loss, and $\mathcal{M}=\{M_j\}_{j=1}^N$ is the FRL state. Assume $\phi(\cdot),\psi_j(\cdot),E^{\mathrm{enc}}_j,E^{\mathrm{dec}}_j,\mathrm{Norm}(\cdot)$ are continuously differentiable, and $\widetilde{\mathcal{L}}(\cdot)$ is lower bounded by $\widetilde{\mathcal{L}}_{\inf}$ and $L$-smooth in $\Theta$, uniformly over bounded $\mathcal{M}$.
\end{assumption}

\begin{assumption}[Bounded states and channel]
\label{as:bounded}
There is $C>0$ with $\|q^{(i)}\|_2\le C$, $\|\psi_j(h^{(i)}_j)\|_2\le C$, $\|p^{(0)}_{j,k}\|_2\le C$, and $0\le w^{(0)}_{j,k}\le 1$. The estimated channel satisfies that the truncation $|\gamma_j|^2\ge\gamma_{\mathrm{th}}$ for all activated clients, so by \Cref{eq:pc} the per-input scaling obeys $\rho^{(i)}\ge\rho_{\min}>0$.
\end{assumption}

\begin{assumption}[Loss and forward-map regularity]
\label{as:loss}
The per-sample loss $\ell(\cdot,Y)$ is twice continuously differentiable in the output, with $L_y$-Lipschitz gradient ($\big\|\nabla_{\hat y}^2\ell(\hat y,Y)\big\|_2\le L_y$) and $\rho_H$-Lipschitz Hessian ($\big\|\nabla_{\hat y}^2\ell(\hat y,Y)-\nabla_{\hat y}^2\ell(\hat y',Y)\big\|_2 \le\rho_H\,\|\hat y-\hat y'\|_2$). The forward-map Jacobian is bounded, $\|J_\Theta(\hat y^{(i)})\|_2\le G$, uniformly over the region of analysis. 
\end{assumption}

\begin{assumption}[Over-the-air noise]
\label{as:noise}
Conditioned on the data and $\Theta$, the aggregation noise $e^{(i)}=\operatorname{Re}\{n\}/\sqrt{\rho^{(i)}}$ in \Cref{eq:agg_ota} is zero-mean, independent of the data, with a symmetric distribution and covariance $\Sigma^{(i)}\preceq(\sigma_c^2/(2\rho_{\min}))\mathbf{I}$, and its dimension (also output dimension) is $d_y$.
\end{assumption}

\begin{assumption}[Stochastic, channel-perturbed update]
\label{as:update}
Define the \emph{noise-smoothed} objective
\begin{equation}
\bar{\mathcal{L}}(\Theta,\mathcal{M})
:=\mathbb{E}_{i}\,\mathbb{E}_{e}\big[\ell(\hat y^{(i)}(\Theta)+e^{(i)},\,Y^{(i)})\big].
\label{eq:smoothed_obj}
\end{equation}
For fixed bounded $\mathcal{M}$, parameters are updated by $\Theta_{t+1}=\Theta_t-\alpha_t H_t g_t$, where $\alpha_t$ is the update step, $g_t$ is the over-the-air stochastic gradient, and $H_t$ is a symmetric positive-definite preconditioning aggregation-scaling matrix applied to the gradient before the parameter step. The preconditioner satisfies $mI\preceq H_t\preceq MI$, and the stepsizes satisfy $\alpha_t>0$, $\sum_t\alpha_t=\infty$, $\sum_t\alpha_t^2<\infty$.
\end{assumption}

\subsection{FRL Stability and Gradient Statistics}

\begin{theorem}[FRL stability]
\label{prop:frl}
Under Assumption~\ref{as:bounded} and $\eta\in(0,1]$, each read-then-update step $t$ keeps every prototype and weight bounded: $\|p^{(t)}_{j,k}\|_2\le C$ and $0\le w^{(t)}_{j,k}\le 1$ for all $t,j,k$.
\end{theorem}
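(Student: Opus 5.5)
We argue by induction on the update index $t$, using that each read-then-update step in \Cref{eq:upd_p}--\Cref{eq:upd_w} is a convex combination. The base case $t=0$ is exactly the initialization bounds in Assumption~\ref{as:bounded}: $\|p^{(0)}_{j,k}\|_2\le C$ and $0\le w^{(0)}_{j,k}\le 1$. For the inductive step, fix a client $j$, an entry $k$, and a step at which client $j$ is activated on input $D^{(i)}$. If the client is not activated, nothing changes and the bounds carry over trivially.

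The key observation is that the mixing coefficient $c:=\eta\alpha^{(i)}_{j,k}$ lies in $[0,1]$. Indeed, $\alpha^{(i)}_{j,k}$ in \Cref{eq:attn} is a softmax entry, so $\alpha^{(i)}_{j,k}\in(0,1)$, and $\eta\in(0,1]$ by hypothesis. Because the attention weights are computed in the read phase from the pre-update memory, $c$ is a fixed scalar during the update. It does not depend on the new value of $p_{j,k}$, so no circularity arises.

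For the prototype, write $p^{(t+1)}_{j,k}=(1-c)p^{(t)}_{j,k}+c\,\psi_j(h^{(i)}_j)$. The triangle inequality then gives
\begin{equation*}
\|p^{(t+1)}_{j,k}\|_2\le(1-c)\|p^{(t)}_{j,k}\|_2+c\|\psi_j(h^{(i)}_j)\|_2\le(1-c)C+cC=C.
\end{equation*}
This uses the induction hypothesis together with the bound $\|\psi_j(h^{(i)}_j)\|_2\le C$ from Assumption~\ref{as:bounded}. Equivalently, the closed ball of radius $C$ is convex and both points being averaged lie in it.

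For the weight, $w^{(t+1)}_{j,k}=(1-c)w^{(t)}_{j,k}+c\cdot 1$ is a convex combination of $w^{(t)}_{j,k}\in[0,1]$ and $1$. It therefore lies in $[0,1]$: it is nonnegative because both coefficients and both terms are nonnegative, and it is at most $(1-c)+c=1$.

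There is no real analytic obstacle here. The one point needing care is that the ball bound on $\psi_j(h^{(i)}_j)$ must hold at every step for whatever parameters $\theta_j,\psi_j$ are current. I would rely on Assumption~\ref{as:bounded}, which imposes it uniformly over $i$ and $j$, and hence over the whole training trajectory. A second minor remark concerns the order of updates within a step. All $k$ entries are updated from the same pre-update read. Even if the updates were applied sequentially, each entry's update involves only its own old value and $\psi_j(h^{(i)}_j)$, so the argument is unaffected. Induction over $t$, $j$ and $k$ then completes the proof.
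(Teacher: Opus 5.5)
Your proof is correct and is essentially the paper's argument: induction on $t$, using that $\eta\alpha^{(i)}_{j,k}\in[0,1]$ makes each update a convex combination, so prototypes stay in the ball of radius $C$ and weights stay in $[0,1]$. One small point is that the softmax entry actually satisfies $\alpha^{(i)}_{j,k}\in(0,1]$, with equality when $K_j=1$, but this does not affect the argument.
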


\noindent \emph{Proof.} Refer to Appendix A.

Theorem~\ref{prop:frl} guarantees $\mathcal{M}$ as bounded. By certifying that the read-then-update dynamics keep prototypes inside a ball of radius $C$ and weights inside $[0,1]$ for all rounds $t$, it ensures that the memory cannot diverge or collapse as online updates accumulate. Notably, Theorem~\ref{prop:frl} holds under the mild and practical update rate $\eta\in(0,1]$.

\begin{lemma}[Unbiasedness for the smoothed objective and inflated variance]
\label{lem:unbiased}
Under Assumptions~\ref{as:loss}--\ref{as:update}, the over-the-air gradient is an \emph{unbiased} estimator of the gradient of the smoothed objective,
\begin{equation}
\mathbb{E}[g_t\mid\Theta_t,\mathcal{M}]=\nabla_\Theta\bar{\mathcal{L}}(\Theta_t,\mathcal{M}),
\label{eq:unbiased_smoothed}
\end{equation}
and its variance is bounded by
\begin{equation}
\mathbb{E}\big[\|g_t-\nabla_\Theta\bar{\mathcal{L}}(\Theta_t,\mathcal{M})\|_2^2\big]
{\;\le\;\sigma^2+\kappa\,\frac{\sigma_c^2}{2\rho_{\min}}\;\triangleq\;\sigma_{\mathrm{eff}}^2},
\label{eq:eff_var}
\end{equation}
where $\sigma^2$ is the data-sampling variance and $\kappa:=G^2L_y^2 d_y$.
\end{lemma}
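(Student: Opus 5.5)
The proof rests on a single observation. With the convention that $g_t$ is the chain-rule gradient of the per-sample loss evaluated at the noisy output,
\[
g_t=J_\Theta(\hat y^{(i)})^{\top}\nabla_{\hat y}\ell\big(\hat y^{(i)}(\Theta_t)+e^{(i)},Y^{(i)}\big),
\]
where $i$ is drawn from the data and $e^{(i)}$ is the over-the-air noise of \Cref{eq:agg_ota}, the noise enters \emph{additively} in the output and is independent of $\Theta$. Hence $\nabla_\Theta\,\ell(\hat y^{(i)}(\Theta)+e^{(i)},Y^{(i)})$ is exactly the above expression.

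\textbf{Unbiasedness.} Condition on $\Theta_t$ and $\mathcal{M}$; the latter is bounded by Theorem~\ref{prop:frl}, so $L$-smoothness and the Jacobian bound $G$ apply uniformly. By Assumption~\ref{as:loss}, $\nabla_{\hat y}\ell$ is $L_y$-Lipschitz, so it grows at most linearly. By Assumption~\ref{as:noise}, $e^{(i)}$ has finite second moment. Together these give an integrable dominating function, so dominated convergence justifies exchanging $\nabla_\Theta$ with $\mathbb{E}_i\mathbb{E}_e$. We use that the law of $e^{(i)}$ does not depend on $\Theta$. Strictly, $\rho^{(i)}$ depends on $\beta^{(i)}$ and hence on $\Theta$; we treat the noise law as frozen at $\Theta_t$, which is what Assumption~\ref{as:noise}'s ``conditioned on $\Theta$'' provides. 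This yields \Cref{eq:unbiased_smoothed} directly from the definition \Cref{eq:smoothed_obj}.

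\textbf{Variance bound.} Introduce the intermediate quantity
\[
\bar g^{(i)}:=\mathbb{E}_e[g_t\mid i]=J_\Theta^\top\,\mathbb{E}_e\nabla_{\hat y}\ell(\hat y^{(i)}+e,Y^{(i)}),
\]
and decompose
\[
g_t-\nabla\bar{\mathcal{L}}=\big(g_t-\bar g^{(i)}\big)+\big(\bar g^{(i)}-\nabla\bar{\mathcal{L}}\big).
\]
Conditioned on $i$, the first term has zero mean. Therefore the cross term vanishes and the second moment splits into a channel part and a data part.

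\emph{Data part.} We identify the data part with $\sigma^2$: it is the sampling variance of the per-sample gradient of the (smoothed) loss. If one prefers $\sigma^2$ to refer to the noiseless gradient, a further $L_y$-Lipschitz comparison argument moves between the two.

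\emph{Channel part.} Use $\|J_\Theta\|_2\le G$ to get
\[
\mathbb{E}_e\|g_t-\bar g^{(i)}\|^2\le G^2\,\mathbb{E}_e\big\|\nabla_{\hat y}\ell(\hat y+e)-\mathbb{E}_{e'}\nabla_{\hat y}\ell(\hat y+e')\big\|^2 .
\]
The variance is at most the second moment about any fixed point, so choose the point $\nabla_{\hat y}\ell(\hat y)$. By $L_y$-Lipschitzness this is bounded by
\[
G^2L_y^2\,\mathbb{E}\|e\|^2=G^2L_y^2\operatorname{tr}\Sigma^{(i)}\le G^2L_y^2\,d_y\,\frac{\sigma_c^2}{2\rho_{\min}},
\]
using $\Sigma^{(i)}\preceq(\sigma_c^2/(2\rho_{\min}))\mathbf I$ from Assumption~\ref{as:noise}, which in turn rests on $\rho^{(i)}\ge\rho_{\min}$ from Assumption~\ref{as:bounded}. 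This gives $\kappa=G^2L_y^2d_y$.

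\textbf{Main obstacle.} The delicate step is the bookkeeping around $\sigma^2$ and the $\Theta$-dependence of $\rho^{(i)}$. We handle it by defining $\sigma^2$ as the data-sampling variance of $\bar g^{(i)}$ and by invoking the conditional formulation of Assumption~\ref{as:noise}. The symmetry of the noise and the Hessian-Lipschitz constant $\rho_H$ are not needed for this lemma. They would matter only if we compared $\nabla\bar{\mathcal{L}}$ with $\nabla\widetilde{\mathcal{L}}$, where symmetry kills the first-order bias term.
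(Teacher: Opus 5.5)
Your proposal is correct. As far as can be inferred (Appendix~B is not included in the source provided), it follows the route the lemma's constants indicate. You interchange $\nabla_\Theta$ with $\mathbb{E}_i\mathbb{E}_e$, using that the noise enters additively and its law is held fixed in $\Theta$. You then split the error via the tower property into a data-sampling part identified with $\sigma^2$ and a channel part bounded by $G^2L_y^2\operatorname{tr}\Sigma^{(i)}\le G^2L_y^2 d_y\,\sigma_c^2/(2\rho_{\min})$, which is exactly $\kappa=G^2L_y^2d_y$.

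Keep $\sigma^2$ defined as the sampling variance of the noise-averaged gradient $\bar g^{(i)}$, as in your main argument. Your aside about switching to the noiseless per-sample gradient is not free: for non-quadratic losses $\bar g^{(i)}\neq J_\Theta^\top\nabla_{\hat y}\ell(\hat y^{(i)},Y^{(i)})$, so the cross term no longer vanishes and the stated constants are not recovered directly.
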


\noindent \emph{Proof.} Refer to Appendix B.

Lemma~\ref{lem:unbiased} is the bridge between the physical-layer channel model and the optimization analysis. It converts the effect of over-the-air aggregation into the two statistical quantities, \ie, an unbiased gradient and a bounded variance $\sigma_{\mathrm{eff}}^2$. It conceptualizes that channel noise, once viewed through the smoothed objective $\bar{\mathcal L}$, contributes no bias to the descent direction and instead manifests purely as inflated variance. Specifically, the standard data-sampling noise variance $\sigma^2$ is additively augmented by the channel term $\kappa\,\sigma_c^2/(2\rho_{\min})$, which cleanly exposes how the complex channel-noise power $\sigma_c^2$, the truncation floor $\rho_{\min}$, and the problem geometry $\kappa=G^2L_y^2 d_y$ each scale the effective noise after coherent real projection. This decomposition lets the convergence theorem invoke off-the-shelf nonconvex-SGD with $\sigma_{\mathrm{eff}}^2$ in place of the usual variance $\sigma^2$, and drives the $\mathcal{O}(\sigma_{\mathrm{eff}}^2/\epsilon^2)$ iteration complexity.

\begin{lemma}[Bias gap to the noiseless objective]
\label{lem:bias}
Under Assumptions~\ref{as:loss}--\ref{as:noise}, for every $\Theta$,
\begin{equation}
\big\|\nabla_\Theta\bar{\mathcal{L}}(\Theta,\mathcal{M})
-\nabla_\Theta\widetilde{\mathcal{L}}(\Theta,\mathcal{M})\big\|_2
{\;\le\;\frac{G\,\rho_H\,d_y}{2}\cdot\frac{\sigma_c^2}{2\rho_{\min}}}
\;\triangleq\;B_\sigma .
\label{eq:bias}
\end{equation}
\end{lemma}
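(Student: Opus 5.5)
The plan is a second-order Taylor expansion of the output-gradient of the per-sample loss around the noiseless output, followed by the chain rule through the forward map. Fix $\Theta$, $\mathcal{M}$ and a sample $i$, and write $\hat y=\hat y^{(i)}(\Theta)$ and $J=J_\Theta(\hat y^{(i)})$. By the chain rule, and interchanging differentiation and expectation (justified by the $C^2$ regularity in Assumption~\ref{as:loss} and the finite second moment of $e^{(i)}$ in Assumption~\ref{as:noise}), we have
\begin{equation*}
\nabla_\Theta\bar{\mathcal L}-\nabla_\Theta\widetilde{\mathcal L}
=\mathbb{E}_i\Big[J^\top\,\mathbb{E}_e\big[\nabla_{\hat y}\ell(\hat y+e^{(i)},Y^{(i)})-\nabla_{\hat y}\ell(\hat y,Y^{(i)})\big]\Big].
\end{equation*}
Here we use that $e^{(i)}$ is independent of the data and of $\Theta$, so the Jacobian of $\hat y+e^{(i)}$ with respect to $\Theta$ is still $J$.

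Next, we expand $\nabla_{\hat y}\ell(\hat y+e)$ to first order with an integral remainder:
\begin{equation*}
\nabla_{\hat y}\ell(\hat y+e)-\nabla_{\hat y}\ell(\hat y)=\nabla^2_{\hat y}\ell(\hat y)\,e+\int_0^1\big(\nabla^2_{\hat y}\ell(\hat y+se)-\nabla^2_{\hat y}\ell(\hat y)\big)e\,ds .
\end{equation*}
Taking $\mathbb{E}_e$ removes the linear term, since $e^{(i)}$ is zero-mean. The symmetry of its distribution is consistent with this cancellation, although zero mean already suffices. By the $\rho_H$-Lipschitz Hessian, the remainder has norm at most $\int_0^1\rho_H s\|e\|_2^2\,ds=\tfrac{\rho_H}{2}\|e\|_2^2$. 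Hence
\begin{equation*}
\big\|\mathbb{E}_e[\cdots]\big\|_2\le\tfrac{\rho_H}{2}\,\mathbb{E}\|e^{(i)}\|_2^2=\tfrac{\rho_H}{2}\operatorname{tr}\Sigma^{(i)}\le\tfrac{\rho_H}{2}\,d_y\,\tfrac{\sigma_c^2}{2\rho_{\min}},
\end{equation*}
using $\Sigma^{(i)}\preceq(\sigma_c^2/(2\rho_{\min}))\mathbf I$ in dimension $d_y$.

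Finally, we use $\|J^\top v\|_2\le G\|v\|_2$ and Jensen's inequality, $\|\mathbb{E}_i[\cdot]\|_2\le\mathbb{E}_i\|\cdot\|_2$, to pull the bound out of the sample expectation. This gives exactly $B_\sigma=\tfrac{G\rho_H d_y}{2}\cdot\tfrac{\sigma_c^2}{2\rho_{\min}}$.

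The main obstacle is the first step, namely justifying that $\nabla_\Theta$ commutes with $\mathbb{E}_e$ and that the Jacobian is unaffected by the noise. I would handle it by dominated convergence. The gradient $J^\top\nabla\ell(\hat y+e)$ grows at most linearly in $\|e\|_2$, because $\nabla\ell$ is $L_y$-Lipschitz, and $e$ has finite variance, so the interchange is licensed. The rest is routine.
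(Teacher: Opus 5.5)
Your proposal is correct and follows essentially the same route as the paper's proof. You apply the chain rule through the bounded Jacobian, expand $\nabla_{\hat y}\ell$ around the noiseless output so that the zero-mean noise kills the linear term, bound the Lipschitz-Hessian remainder by $\tfrac{\rho_H}{2}\|e\|_2^2$, and use $\mathbb{E}\|e^{(i)}\|_2^2=\operatorname{tr}\Sigma^{(i)}\le d_y\,\sigma_c^2/(2\rho_{\min})$, which yields exactly $B_\sigma$. You also explicitly treat $e^{(i)}$ as exogenous to $\Theta$, so that the Jacobian is unchanged; this is the same convention implicit in the paper's definition of $\bar{\mathcal L}$.
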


\noindent \emph{Proof.} Refer to Appendix C.

Lemma~\ref{lem:bias} quantifies the price paid for optimizing the tractable smoothed surrogate $\bar{\mathcal L}$ instead of the true objective $\widetilde{\mathcal L}$. It certifies that the two gradients never differ by more than the constant $B_\sigma$, uniformly over all $\Theta$. This complements Lemma~\ref{lem:unbiased}. Specifically, Lemma~\ref{lem:unbiased} guarantees that $g_t$ tracks $\nabla\bar{\mathcal L}$ exactly, and this lemma bounds how far $\nabla_\Theta\bar{\mathcal L}$ itself sits from $\nabla_\Theta\widetilde{\mathcal{L}}$. Therefore, they together ensure that the SGD converges to a stationary point of $\widetilde{\mathcal L}$ bounded by $B_\sigma$, where the bound scales with the constant $\rho_H$, $G$, $d_y$, and $\sigma_c^2/\rho_{\min}$. 

\subsection{Main Result}

\begin{theorem}[Convergence to a stationary point of the smoothed objective]
\label{thm:conv}
Fix any bounded $\mathcal{M}$ satisfying Theorem~\ref{prop:frl}, and suppose Assumptions~\ref{as:smooth}--\ref{as:update} hold with $\bar{\mathcal L}(\cdot)$ $L$-smooth and lower bounded by $\bar{\mathcal L}_{\inf}$ (both inherited from Assumptions~\ref{as:smooth}, \ref{as:loss}). If $\alpha_t\le m/(LM^2)$, then
\begin{align}
\frac{\sum_{t=1}^{T}\alpha_t\,\mathbb{E}\big[\|\nabla_\Theta\bar{\mathcal{L}}(\Theta_t,\mathcal{M})\|_2^2\big]}
     {\sum_{t=1}^{T}\alpha_t}
\;\le\;&\frac{2(\bar{\mathcal{L}}_1-\bar{\mathcal{L}}_{\inf})}{m\sum_{t=1}^{T}\alpha_t}
      + \nonumber \\
      &\frac{LM^2\sigma_{\mathrm{eff}}^2\sum_{t=1}^{T}\alpha_t^2}{m\sum_{t=1}^{T}\alpha_t},
\label{eq:thm}
\end{align}
where $\bar{\mathcal{L}}_1=\bar{\mathcal{L}}(\Theta_1,\mathcal{M})$. Consequently $\liminf_{T\to\infty}$ $\mathbb{E}[\|\nabla_\Theta\bar{\mathcal{L}}(\Theta_T,\mathcal{M})\|_2^2]=0$.
\end{theorem}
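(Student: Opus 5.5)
The plan is the standard descent-lemma argument for preconditioned nonconvex SGD, with the channel noise handled entirely through Lemma~\ref{lem:unbiased}. Fix the bounded FRL state $\mathcal{M}$ guaranteed by Theorem~\ref{prop:frl}, so that $\bar{\mathcal L}(\cdot,\mathcal M)$ is an $L$-smooth function of $\Theta$ alone. Applying $L$-smoothness to the update $\Theta_{t+1}=\Theta_t-\alpha_tH_tg_t$ gives
\begin{equation*}
\bar{\mathcal L}(\Theta_{t+1})\le \bar{\mathcal L}(\Theta_t)-\alpha_t\langle\nabla\bar{\mathcal L}(\Theta_t),H_tg_t\rangle+\tfrac{L\alpha_t^2}{2}\|H_tg_t\|_2^2 .
\end{equation*}
Next I take the conditional expectation given $\Theta_t$ (and $\mathcal F_t$, the history). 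I will treat $H_t$ as $\mathcal F_t$-measurable, i.e., determined before the fresh sample and channel noise are drawn. This is the point I expect to need care, because Assumption~\ref{as:update} does not state it explicitly, and I would add it as an interpretation. Under this measurability, Lemma~\ref{lem:unbiased} gives $\mathbb E[\langle\nabla\bar{\mathcal L},H_tg_t\rangle\mid\mathcal F_t]=\nabla\bar{\mathcal L}^\top H_t\nabla\bar{\mathcal L}\ge m\|\nabla\bar{\mathcal L}(\Theta_t)\|^2$.

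For the quadratic term I use $\|H_tg_t\|^2\le M^2\|g_t\|^2$ and the bias–variance split $\mathbb E\|g_t\|^2=\|\nabla\bar{\mathcal L}\|^2+\mathbb E\|g_t-\nabla\bar{\mathcal L}\|^2\le\|\nabla\bar{\mathcal L}\|^2+\sigma_{\mathrm{eff}}^2$, again by Lemma~\ref{lem:unbiased}. This yields
\begin{equation*}
\mathbb E\bar{\mathcal L}_{t+1}\le\mathbb E\bar{\mathcal L}_t-\alpha_t\big(m-\tfrac{LM^2\alpha_t}{2}\big)\mathbb E\|\nabla\bar{\mathcal L}_t\|^2+\tfrac{LM^2\alpha_t^2}{2}\sigma_{\mathrm{eff}}^2 .
\end{equation*}
The stepsize condition $\alpha_t\le m/(LM^2)$ gives $m-LM^2\alpha_t/2\ge m/2$. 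I then telescope from $t=1$ to $T$, bound $\mathbb E\bar{\mathcal L}_{T+1}\ge\bar{\mathcal L}_{\inf}$, multiply by $2/m$, and divide by $\sum_t\alpha_t$. This produces exactly \eqref{eq:thm}: the first term is $2(\bar{\mathcal L}_1-\bar{\mathcal L}_{\inf})/(m\sum\alpha_t)$, and the second is $LM^2\sigma_{\mathrm{eff}}^2\sum\alpha_t^2/(m\sum\alpha_t)$.

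For the consequence, I invoke the Robbins–Monro conditions $\sum\alpha_t=\infty$ and $\sum\alpha_t^2<\infty$. Under these, both terms on the right-hand side tend to zero. A weighted average of the nonnegative sequence $\mathbb E\|\nabla\bar{\mathcal L}(\Theta_t)\|^2$ with weights $\alpha_t$ then tends to zero. If $\liminf_t\mathbb E\|\nabla\bar{\mathcal L}_t\|^2$ were some $c>0$, the weighted average would eventually exceed $c/2$, because $\sum\alpha_t$ diverges and so the finitely many early terms become negligible. This contradiction gives $\liminf=0$.

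The only real obstacle is the measurability of $H_t$ just discussed; everything else is routine bookkeeping with the constants $m$ and $M$.
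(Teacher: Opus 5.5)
Your proposal is correct and follows the standard preconditioned-SGD argument, which is evidently the paper's route. The paper's Appendix D is not included in the source, but your derivation reproduces \eqref{eq:thm} with exactly its constants. It uses the $L$-smoothness descent inequality, and $mI\preceq H_t\preceq MI$ together with Lemma~\ref{lem:unbiased}: unbiasedness handles the inner product, while $\|H_tg_t\|_2^2\le M^2\|g_t\|_2^2$ plus the $\sigma_{\mathrm{eff}}^2$ bound handles the quadratic term. It then uses $\alpha_t\le m/(LM^2)$ to keep the coefficient at least $m/2$, telescopes against $\bar{\mathcal L}_{\inf}$, and invokes the Robbins--Monro conditions for the $\liminf$ claim. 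Your requirement that $H_t$ be $\mathcal F_t$-measurable is a legitimate sharpening of a point the statement leaves implicit, not a gap in your argument; it would fail for adaptive preconditioners built from the current gradient $g_t$.
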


\noindent \emph{Proof.} Refer to Appendix D.

Theorem~\ref{thm:conv} certifies that SGD update based on Assumption~\ref{as:update} actually converges to a stationary point of $\bar{\mathcal L}(\cdot)$ despite over-the-air noise. The bound cleanly separates two error sources, where an optimization term $2(\bar{\mathcal L}_1-\bar{\mathcal L}_{\inf})/(m\sum\alpha_t)$ that decays as the accumulated stepsize grows, and a noise term proportional to $\sigma_{\mathrm{eff}}^2\sum\alpha_t^2$ that conditions ($\sum\alpha_t=\infty$, $\sum\alpha_t^2<\infty$) force to vanish. Working together with Lemma~\ref{lem:bias}, Theorem~\ref{thm:conv} upgrades to approximate stationarity of the true objective $\widetilde{\mathcal L}$ whose radius is governed by the channel bias $B_\sigma$.

\begin{remark}[On the fixed-$\mathcal{M}$ assumption]
Theorem~\ref{thm:conv} fixes the FRL state $\mathcal{M}$, whereas in \Cref{alg:airmoe} the prototypes are updated online via \Cref{eq:upd_p}--\Cref{eq:upd_w}. Theorem~\ref{prop:frl} guarantees $\mathcal M$ stays in a bounded set for all $t$. Given a memory rate $\eta$ small relative to $\alpha_t$, the FRL evolves on a slow timescale, and  we can treat $\mathcal{M}$ as quasi-static for the fast $\Theta$-updates. 
\end{remark}

\begin{lemma}[Rate, SNR dependence, and the noiseless-objective floor]
\label{cor:rate}
With $\alpha_t=\alpha/\sqrt{T}$ and $\alpha\le m/(LM^2)$, Theorem~\ref{thm:conv} gives
\begin{align}
\min_{1\le t\le T}\mathbb{E}\big[\|\nabla_\Theta\bar{\mathcal{L}}(\Theta_t,\mathcal{M})\|_2^2\big]
&={\mathcal{O}\!\left(\frac{1}{\sqrt{T}}\Big(\sigma^2+\frac{\kappa\sigma_c^2}{2\rho_{\min}}\Big)\right)} \nonumber \\
&=\mathcal{O}\!\left(\frac{\sigma_{\mathrm{eff}}^2}{\sqrt{T}}\right).
\label{eq:cor_smooth}
\end{align}
Moreover, using $\|\nabla\widetilde{\mathcal L}\|_2^2\le 2\|\nabla\bar{\mathcal L}\|_2^2+2B_\sigma^2$ with the bias $B_\sigma$ of Lemma~\ref{lem:bias}, the iterates approach a \emph{neighborhood} of a stationary point of the \emph{noiseless} objective $\widetilde{\mathcal{L}}$:
\begin{align}
\min_{1\le t\le T}\mathbb{E}\big[\|\nabla_\Theta\widetilde{\mathcal{L}}(\Theta_t,\mathcal{M})\|_2^2\big]
\;\le\;&\mathcal{O}\!\left(\frac{\sigma_{\mathrm{eff}}^2}{\sqrt{T}}\right)
\;+\; \nonumber \\
&{\underbrace{\frac{G^2\rho_H^2 d_y^2}{8}\cdot\frac{\sigma_c^4}{\rho_{\min}^2}}_{\text{channel-induced floor }=\,2B_\sigma^2}}.
\label{eq:cor_floor}
\end{align}
\end{lemma}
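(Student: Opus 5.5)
The plan is to specialize Theorem~\ref{thm:conv} to the constant stepsize $\alpha_t=\alpha/\sqrt{T}$ and then transfer the result to $\widetilde{\mathcal L}$ using Lemma~\ref{lem:bias}.

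\textbf{Rate on the smoothed objective.} First check that the stepsize condition of Theorem~\ref{thm:conv} holds. Since $\alpha_t=\alpha/\sqrt{T}\le\alpha\le m/(LM^2)$ for every $t\le T$, it does. The conditions $\sum_t\alpha_t=\infty$ and $\sum_t\alpha_t^2<\infty$ are not needed for the finite-$T$ bound~\eqref{eq:thm}, whose derivation only uses the per-step descent inequality. With this choice, $\sum_{t=1}^T\alpha_t=\alpha\sqrt{T}$ and $\sum_{t=1}^T\alpha_t^2=\alpha^2$. Substituting into~\eqref{eq:thm} gives a right-hand side of
\begin{equation*}
\frac{2(\bar{\mathcal L}_1-\bar{\mathcal L}_{\inf})}{m\alpha\sqrt{T}}+\frac{LM^2\sigma_{\mathrm{eff}}^2\alpha}{m\sqrt{T}}.
\end{equation*}
Because the weights $\alpha_t$ are all equal, the left-hand side is an arithmetic mean of $\mathbb{E}\|\nabla\bar{\mathcal L}(\Theta_t,\mathcal M)\|_2^2$. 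The minimum over $t$ is at most this mean, which gives an $\mathcal{O}(1/\sqrt{T})$ bound. Expanding $\sigma_{\mathrm{eff}}^2=\sigma^2+\kappa\sigma_c^2/(2\rho_{\min})$ by Lemma~\ref{lem:unbiased} makes the SNR dependence explicit. The initial-gap term is also $\mathcal O(1/\sqrt T)$, and it is absorbed into the $\mathcal{O}(\cdot)$ by treating $\bar{\mathcal L}_1-\bar{\mathcal L}_{\inf}$, $m$, $M$, $L$ and $\alpha$ as constants. This yields~\eqref{eq:cor_smooth}.

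\textbf{Transfer to $\widetilde{\mathcal L}$.} For each $t$, write $\nabla\widetilde{\mathcal L}=\nabla\bar{\mathcal L}+(\nabla\widetilde{\mathcal L}-\nabla\bar{\mathcal L})$ and apply $\|a+b\|^2\le2\|a\|^2+2\|b\|^2$. Lemma~\ref{lem:bias} holds uniformly in $\Theta$, so it applies at the random iterate $\Theta_t$, and taking expectations gives
\begin{equation*}
\mathbb{E}\|\nabla\widetilde{\mathcal L}(\Theta_t)\|^2\le2\,\mathbb{E}\|\nabla\bar{\mathcal L}(\Theta_t)\|^2+2B_\sigma^2.
\end{equation*}
Minimizing over $t$ on both sides preserves the inequality, because the additive constant does not depend on $t$. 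The factor $2$ is absorbed into the $\mathcal O$. Finally, compute the floor:
\begin{equation*}
2B_\sigma^2=2\Big(\frac{G\rho_H d_y}{2}\cdot\frac{\sigma_c^2}{2\rho_{\min}}\Big)^2=\frac{G^2\rho_H^2d_y^2}{8}\cdot\frac{\sigma_c^4}{\rho_{\min}^2},
\end{equation*}
which matches~\eqref{eq:cor_floor}.

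\textbf{Main obstacle.} The only delicate step is justifying Theorem~\ref{thm:conv} for a $T$-dependent constant stepsize, since that theorem is stated under Robbins--Monro conditions. I would argue that the inequality~\eqref{eq:thm} comes from telescoping the one-step bound $\mathbb{E}\bar{\mathcal L}_{t+1}\le\mathbb{E}\bar{\mathcal L}_t-\tfrac{m\alpha_t}{2}\mathbb{E}\|\nabla\bar{\mathcal L}_t\|^2+\tfrac{LM^2\alpha_t^2}{2}\sigma_{\mathrm{eff}}^2$. That bound needs only $\alpha_t\le m/(LM^2)$ and therefore holds for any finite horizon. The remaining steps are bookkeeping.
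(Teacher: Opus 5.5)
Your proposal is correct and takes the route the statement itself prescribes: substitute $\alpha_t=\alpha/\sqrt{T}$ into Theorem~\ref{thm:conv} (so $\sum_t\alpha_t=\alpha\sqrt{T}$ and $\sum_t\alpha_t^2=\alpha^2$), bound the minimum by the uniform average, then combine $\|a+b\|_2^2\le 2\|a\|_2^2+2\|b\|_2^2$ with the uniform bias bound of Lemma~\ref{lem:bias} to get the floor $2B_\sigma^2$. Your point that the finite-horizon bound needs only $\alpha_t\le m/(LM^2)$, not the Robbins--Monro conditions, is a useful clarification; note that absorbing the $\sigma_{\mathrm{eff}}$-independent term $2(\bar{\mathcal L}_1-\bar{\mathcal L}_{\inf})/(m\alpha\sqrt{T})$ into $\mathcal{O}(\sigma_{\mathrm{eff}}^2/\sqrt{T})$ also needs $\sigma_{\mathrm{eff}}^2\ge\sigma^2>0$ to be held fixed, a looseness already present in the statement.
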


\noindent \emph{Proof.} Refer to Appendix E.

Two effects of the channel are now explicit and qualitatively distinct. (i) The transient term in \Cref{eq:cor_floor} is inflated by the variance $\kappa\sigma_c^2/(2\rho_{\min})$. A better channel (larger $\rho_{\min}$, i.e.\ higher receive SNR) reduces the noise that the optimizer must average out, but this term still decays as $T^{-1/2}$. (ii) The steady-state floor $2B_\sigma^2=\mathcal{O}(\sigma_c^4/\rho_{\min}^2)$ does not vanish with $T$. It is the reason why AirMoE converges to a neighborhood of a noiseless stationary point rather than to an exact one. Both terms shrink monotonically with the receive SNR.

\subsection{$\epsilon$-Stationarity and Iteration Complexity}
\label{sec:eps}

The following theorem reports the number of iterations needed to reach an $\epsilon$-stationary point. 

\begin{theorem}[$\epsilon$-stationarity of the smoothed objective]
\label{cor:eps_smooth}
Fix any bounded $\mathcal{M}$ satisfying Theorem~\ref{prop:frl} and suppose Assumptions~\ref{as:smooth}--\ref{as:update} hold. Run the update with the constant stepsize
\begin{equation}
\alpha_t\equiv\alpha
=\min\!\left\{\frac{m}{LM^2},\;
\sqrt{\frac{2\Delta}{LM^2\sigma_{\mathrm{eff}}^2\,T}}\right\},
\label{eq:eps_step}
\end{equation}
where $\Delta:=\bar{\mathcal{L}}_1-\bar{\mathcal{L}}_{\inf}$ denotes the initial-optimality gap of the smoothed objective. Then
\begin{equation}
\min_{1\le t\le T}\mathbb{E}\big[\|\nabla_\Theta\bar{\mathcal L}(\Theta_t,\mathcal{M})\|_2^2\big]
\;\le\;\frac{2LM^2\Delta}{m^2\,T}
\;+\;\frac{2M}{m}\sqrt{\frac{2L\Delta\,\sigma_{\mathrm{eff}}^2}{T}} .
\label{eq:eps_bound}
\end{equation}
Consequently, the iterate with the smallest expected gradient norm is an $\epsilon$-stationary point of $\bar{\mathcal L}$ after at most 
\begin{equation}
T(\epsilon)
=\mathcal{O}\!\left(\frac{L M^2\,\Delta\,\sigma_{\mathrm{eff}}^2}{m^2\,\epsilon^{2}}\right)
={\mathcal{O}\!\left(\frac{1}{\epsilon^{2}}\Big(\sigma^{2}+\frac{\kappa\,\sigma_c^2}{2\rho_{\min}}\Big)\right)}
\label{eq:eps_complexity}
\end{equation}
iterations, where the second equality isolates the dependence on the channel through $\sigma_{\mathrm{eff}}^2=\sigma^2+\kappa\sigma_c^2/(2\rho_{\min})$. 
\end{theorem}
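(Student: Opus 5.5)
We specialize the bound of Theorem~\ref{thm:conv} to a constant stepsize and then optimize over $\alpha$. With $\alpha_t\equiv\alpha\le m/(LM^2)$, the weighted average on the left of \Cref{eq:thm} is exactly the uniform average. That average upper-bounds the minimum over $t$, so
\begin{equation*}
\min_{1\le t\le T}\mathbb{E}\big[\|\nabla_\Theta\bar{\mathcal L}(\Theta_t,\mathcal M)\|_2^2\big]\le \frac{2\Delta}{m\alpha T}+\frac{LM^2\sigma_{\mathrm{eff}}^2\,\alpha}{m}.
\end{equation*}
Here $\sigma_{\mathrm{eff}}^2$ is the variance from Lemma~\ref{lem:unbiased}. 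The hypotheses of Theorem~\ref{thm:conv} (bounded $\mathcal M$ from Theorem~\ref{prop:frl}, Assumptions~\ref{as:smooth}--\ref{as:update}) are exactly those stated, so this step is just substitution. The conditions $\sum\alpha_t=\infty$ and $\sum\alpha_t^2<\infty$ are not needed for the finite-$T$ inequality. I would note that the proof of Theorem~\ref{thm:conv} (the descent lemma summed over $t\le T$) only uses $\alpha_t\le m/(LM^2)$.

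The main step is a two-case analysis of the $\min$ in \Cref{eq:eps_step}. Let $\alpha^\star=\sqrt{2\Delta/(LM^2\sigma_{\mathrm{eff}}^2T)}$, which balances the two terms.

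\emph{Case 1:} $\alpha=\alpha^\star\le m/(LM^2)$. Both terms equal $\sqrt{2L M^2\Delta\sigma_{\mathrm{eff}}^2/T}/m$, so the sum is $\frac{2M}{m}\sqrt{2L\Delta\sigma_{\mathrm{eff}}^2/T}$.

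\emph{Case 2:} $\alpha=m/(LM^2)<\alpha^\star$. The first term becomes $2LM^2\Delta/(m^2T)$. For the second term, $\alpha<\alpha^\star$ gives $\frac{LM^2\sigma_{\mathrm{eff}}^2\alpha}{m}\le\frac{LM^2\sigma_{\mathrm{eff}}^2\alpha^\star}{m}=\frac{M}{m}\sqrt{2L\Delta\sigma_{\mathrm{eff}}^2/T}$.

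In either case the total is bounded by the sum of the two terms in \Cref{eq:eps_bound}, which gives \Cref{eq:eps_bound}.

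For the complexity, we require each of the two terms to be at most $\epsilon/2$. If $\epsilon$ here measures the squared gradient norm, the bound on the squared norm decays as $T^{-1/2}$ in the dominant term, so $\epsilon$-stationarity needs $T\gtrsim 32 LM^2\Delta\sigma_{\mathrm{eff}}^2/(m^2\epsilon^2)$. The first term needs only $T\gtrsim 4LM^2\Delta/(m^2\epsilon)$, which is of lower order for small $\epsilon$. Taking the maximum yields $T(\epsilon)=\mathcal O(LM^2\Delta\sigma_{\mathrm{eff}}^2/(m^2\epsilon^2))$. Substituting $\sigma_{\mathrm{eff}}^2=\sigma^2+\kappa\sigma_c^2/(2\rho_{\min})$ from \Cref{eq:eff_var}, and absorbing $L,M,m,\Delta$ into the $\mathcal O$, gives the second form of \Cref{eq:eps_complexity}.

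I expect the main obstacle to be the bookkeeping of the stepsize cases and the interpretation of $\epsilon$ (squared norm versus norm). The stated rate $\epsilon^{-2}$ matches the squared-norm convention, and I would say so explicitly. A secondary point is that $\alpha$ depends on $T$, so $T$ must be fixed in advance; this is consistent with the statement.
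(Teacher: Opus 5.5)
Your proposal is correct and follows essentially the same route the paper's structure points to (Appendix~F itself is not reproduced here). You specialize the weighted-average bound of Theorem~\ref{thm:conv} to a constant stepsize, bound the minimum by the uniform average, and handle the $\min$ in \Cref{eq:eps_step}; your two-case split is equivalent to using $1/\alpha\le LM^2/m+1/\alpha^\star$ together with $\alpha\le\alpha^\star$, which yields exactly the two terms of \Cref{eq:eps_bound}. You then require each term to be at most $\epsilon/2$ under the squared-norm convention. Your side remarks are accurate: the finite-$T$ descent bound needs only $\alpha_t\le m/(LM^2)$, and $T$ must be fixed in advance because $\alpha$ depends on it.
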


\noindent \emph{Proof.} Refer to Appendix F.

Theorem~\ref{cor:eps_smooth} converts the asymptotic guarantee of Theorem~\ref{thm:conv} into a complexity result. With the stepsize in \Cref{eq:eps_step}, it certifies the best iteration that reaches an $\epsilon$-stationary point of $\bar{\mathcal{L}}(\cdot)$ within a finite number of rounds. Through the two terms of \Cref{eq:eps_bound}, it also recovers the canonical nonconvex-SGD behavior with an $\mathcal O(1/T)$ transient followed by the dominant $\mathcal O(1/\sqrt{T})$ term. In addition, as \Cref{eq:eps_complexity}, by isolating the channel's contribution as an additive $\kappa\sigma_c^2/(2\rho_{\min})$ term inside $\sigma_{\mathrm{eff}}^2$, it makes the communication--computation tradeoff quantitative and actionable. This indicates that the training cost degrades linearly with channel power $\sigma_c^2$ and improves with the receive-SNR floor $\rho_{\min}$, cleanly separating the data-sampling cost $\sigma^2$ from the communication-induced overhead. 

\begin{lemma}[$\epsilon$-stationarity of the noiseless objective, above the floor]
\label{cor:eps_noiseless}
Let $B_\sigma=\tfrac{1}{4}G\rho_H d_y\,\sigma_c^2/\rho_{\min}$ be the bias of Lemma~\ref{lem:bias}. Under the setting of Lemma~\ref{cor:eps_smooth}, the relation $\|\nabla\widetilde{\mathcal L}\|_2^2\le 2\|\nabla\bar{\mathcal L}\|_2^2+2B_\sigma^2$ implies that an $\epsilon$-stationary point of the \emph{noiseless} objective $\widetilde{\mathcal L}$ is reachable \emph{iff}
\begin{equation}
\epsilon\;>\;\epsilon_{\mathrm{floor}}\;:=\;2B_\sigma^2
={\frac{G^2\rho_H^2 d_y^2}{8}\cdot\frac{\sigma_c^4}{\rho_{\min}^2}},
\label{eq:floor}
\end{equation}
in which case it is attained after at most
\begin{equation}
T(\epsilon)=\mathcal{O}\!\left(\frac{L M^2\,\Delta\,\sigma_{\mathrm{eff}}^2}
{m^2\,(\epsilon-\epsilon_{\mathrm{floor}})^{2}}\right)
\label{eq:eps_noiseless_complexity}
\end{equation}
iterations. For $\epsilon\le\epsilon_{\mathrm{floor}}$, no finite $T$ suffices because the channel-induced bias prevents exact stationarity of $\widetilde{\mathcal L}$.
\end{lemma}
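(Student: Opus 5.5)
The plan is to reduce the statement to Theorem~\ref{cor:eps_smooth} applied at a shrunken tolerance. First we convert the bias estimate of Lemma~\ref{lem:bias} into the quadratic relation. Pointwise in $\Theta$, the triangle inequality and $(a+b)^2\le 2a^2+2b^2$ give
\[
\|\nabla_\Theta\widetilde{\mathcal L}\|_2^2\le\bigl(\|\nabla_\Theta\bar{\mathcal L}\|_2+B_\sigma\bigr)^2\le 2\|\nabla_\Theta\bar{\mathcal L}\|_2^2+2B_\sigma^2 .
\]
Since $B_\sigma$ is deterministic and uniform in $\Theta$, taking expectations at each $\Theta_t$ and then the minimum over $t$ yields
\[
\min_{t\le T}\mathbb{E}\|\nabla\widetilde{\mathcal L}(\Theta_t)\|_2^2\le 2\min_{t\le T}\mathbb{E}\|\nabla\bar{\mathcal L}(\Theta_t)\|_2^2+\epsilon_{\mathrm{floor}},
\]
where $\epsilon_{\mathrm{floor}}=2B_\sigma^2$. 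Substituting $B_\sigma=\tfrac14 G\rho_H d_y\sigma_c^2/\rho_{\min}$ recovers the explicit constant in \Cref{eq:floor}.

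For sufficiency, assume $\epsilon>\epsilon_{\mathrm{floor}}$. It suffices to make the smoothed quantity at most $\epsilon':=(\epsilon-\epsilon_{\mathrm{floor}})/2$. We invoke Theorem~\ref{cor:eps_smooth} with the stepsize \Cref{eq:eps_step}. Its bound \Cref{eq:eps_bound} is below $\epsilon'$ once both terms are at most $\epsilon'/2$. This requires
\[
T\gtrsim \frac{LM^2\Delta}{m^2\epsilon'}\quad\text{and}\quad T\gtrsim \frac{M^2 L\Delta\sigma_{\mathrm{eff}}^2}{m^2\epsilon'^2}.
\]
The second condition dominates for small $\epsilon'$, which gives \Cref{eq:eps_noiseless_complexity}, with the factor $4$ absorbed into $\mathcal O(\cdot)$. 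Here $\epsilon$-stationarity is read in the same squared-norm sense as in Theorem~\ref{cor:eps_smooth}, so the tolerances match.

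The main obstacle is the ``only if'' direction. The claim that no finite $T$ suffices when $\epsilon\le\epsilon_{\mathrm{floor}}$ does not follow from an upper bound: the inequality only shows that the available guarantee cannot certify accuracy below $\epsilon_{\mathrm{floor}}$. I would therefore state necessity as a statement about this bound. Because the right-hand side of the displayed relation is always at least $\epsilon_{\mathrm{floor}}$ for every $T$, no choice of $T$ makes the guaranteed bound smaller than $\epsilon\le\epsilon_{\mathrm{floor}}$. To claim genuine unreachability, I would first try a one-dimensional quadratic-plus-cubic loss. For such a loss the Hessian-Lipschitz remainder in the Taylor expansion of $\mathbb E_e\nabla\ell(\hat y+e)$ is attained, so $\|\nabla\bar{\mathcal L}-\nabla\widetilde{\mathcal L}\|=B_\sigma$ exactly and iterates converging to stationary points of $\bar{\mathcal L}$ stay bounded away from stationarity of $\widetilde{\mathcal L}$. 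I expect only a constant-factor match to $2B_\sigma^2$ there. I would present that direction as tightness of the floor up to constants, and keep the exact iff relative to the certified bound.
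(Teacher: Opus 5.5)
Your proof is correct and follows the paper's route. The pointwise bias bound of Lemma~\ref{lem:bias}, combined with $(a+b)^2\le 2a^2+2b^2$, gives the relation. Theorem~\ref{cor:eps_smooth} run at tolerance $(\epsilon-\epsilon_{\mathrm{floor}})/2$ then gives the $(\epsilon-\epsilon_{\mathrm{floor}})^{-2}$ complexity, keeping only the dominant $T^{-1/2}$ term exactly as the paper does. The converse is read off from the right-hand side never dropping below $\epsilon_{\mathrm{floor}}$.

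Your caveat on the ``only if'' direction is well founded: an argument built on this upper bound can only show that accuracy below the floor cannot be certified. Genuine unreachability is in fact false in general. For instance, take $\hat y=\Theta$, $\ell(y)=\sqrt{1+y^2}$ and symmetric noise. Then $\bar{\mathcal L}$ and $\widetilde{\mathcal L}$ share their unique stationary point $0$ and satisfy $|\nabla\widetilde{\mathcal L}|\le C\,|\nabla\bar{\mathcal L}|$ globally. So arbitrarily small $\epsilon$ is reachable, even though $\rho_H>0$ makes $\epsilon_{\mathrm{floor}}>0$.
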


\noindent \emph{Proof.} Refer to Appendix G.

Lemma~\ref{cor:eps_noiseless} completes the analytical loop by transferring the finite-time guarantee from the smoothed surrogate $\bar{\mathcal L}$ back to the true objective $\widetilde{\mathcal L}$, and further exposes a fundamental accuracy floor intrinsic to over-the-air training. Its central message is a sharp dichotomy: (i) any target accuracy strictly above the floor $\epsilon_{\mathrm{floor}}=2B_\sigma^2$ is reachable in finite rounds, which shares the same $\mathcal O(\sigma_{\mathrm{eff}}^2/(\epsilon- \epsilon_{\mathrm{floor}})^2)$ complexity as \Cref{eq:eps_noiseless_complexity} but is merely re-centered by the floor. (ii) in contrast, no finite budget can drive $\widetilde{\mathcal L}$ below $\epsilon_{\mathrm{floor}}=2B_\sigma^2$, because the channel bias $B_\sigma$ irreducibly separates stationary points of the surrogate from those of the true loss. 

\section{Experiments and Analyses} \label{sec:experiments}

In this section, we carry out extensive simulations to verify the proposed AirMoE, taking semantic segmentation task as an example. Specifically, we first introduce the simulation setup in \Cref{sec:exp_setup}. We then compare and analyze experimental results in \Cref{sec:exp_comp_analyses}. Subsequently, we carry out evaluations about $\epsilon$-stationarity iteration complexity under over-the-air setting in \Cref{sec:exp_convergence_exploration}. We then visualize the relationship between LM-extracted features and expert-wise FRL prototypes in \Cref{sec:AirMoE_visualization}. Finally, we conduct ablation studies to investigate various hyperparameters' effect on AirMoE's performance in \Cref{sec:ablation_study}.

\subsection{Datasets, Evaluation Metrics and Implementation} \label{sec:exp_setup}
\subsubsection{Datasets}
The \textbf{Cityscapes} dataset \cite{Cordts2016Cityscapes} includes 2,975 training and 500 validation images annotated with masks for 19 semantic classes. The \textbf{CamVid} dataset \cite{brostow2008segmentation} contains 701 images across 11 semantic classes, with 600 used for training and the other 101 for testing. A subset of the \textbf{Apolloscapes} dataset \cite{wang2019apolloscape}, featuring 854 training and 400 test images, provides pixel-level labels for 23 classes. The \textbf{CARLA\_ADV} dataset, generated via the CARLA simulator (version 0.9.13) \cite{dosovitskiy2017carla}, focuses on adverse weather conditions (e.g., fog and rain) and includes 2,764 training and 1,921 test images, annotated for 23 pixel-level classes. These benchmarks include both real-world and simulation scenarios, covering various domain biases. 

\subsubsection{Evaluation Metrics}
We evaluate the proposed AirMoE on semantic segmentation task using four metrics: \textbf{mIoU}, which quantifies the overlap between prediction and ground truth; \textbf{mPre}, which measures the accuracy of positive prediction; \textbf{mRec}, which evaluates the model's ability to identify relevant instances; and \textbf{mF1}, which balances mPre and mRec.

\subsubsection{Implementation Details}
We take ViT \cite{dosovitskiy2020image} as the cloud-side LM backbone to extract shared features and adopt ASPP architecture \cite{chen2017rethinking} for all experts within AirMoE, where the ViT backbone keeps frozen and client-side experts are trained from scratch. The Adam optimizer is chosen for expert optimization with Betas values of (0.9, 0.999), a weight decay of 1e-4, and a learning rate of 3e-4. The default hyperparameters of the proposed AirMoE include: 10 experts, a TopK value of 5, 16 FRL prototypes per expert, a load-balancing weight ($\lambda_{\text{LB}}$) of 0.01, and a FRL regularization weight ($\lambda_{\text{FRL}}$) of 1e-4. 

\begin{table*}[tp]
\setlength{\tabcolsep}{4.9pt}
\centering
\caption{Performance comparison of the proposed AirMoE against other MoE baselines and single-model approaches for all adopted metrics across multiple semantic segmentation datasets}
\begin{tabularx}{\linewidth}{c|cccc|cccc|cccc|cccc}
\toprule
\multirow{2}{*}{Method} & \multicolumn{4}{c|}{Apolloscapes}                                                                            & \multicolumn{4}{c|}{CamVid}                                                                                  & \multicolumn{4}{c|}{CARLA\_ADV}                                                                              & \multicolumn{4}{c}{Cityscapes}                                                                              \\ \cline{2-17} 
                        & \multicolumn{1}{c}{mIoU}  & \multicolumn{1}{c}{mF1}   & \multicolumn{1}{c}{mPre} & \multicolumn{1}{c|}{mRec} & \multicolumn{1}{c}{mIoU}  & \multicolumn{1}{c}{mF1}   & \multicolumn{1}{c}{mPre} & \multicolumn{1}{c|}{mRec} & \multicolumn{1}{c}{mIoU}  & \multicolumn{1}{c}{mF1}   & \multicolumn{1}{c}{mPre} & \multicolumn{1}{c|}{mRec} & \multicolumn{1}{c}{mIoU}  & \multicolumn{1}{c}{mF1}   & \multicolumn{1}{c}{mPre} & \multicolumn{1}{c}{mRec} \\ \midrule
BiSecNetV2              & \multicolumn{1}{c}{22.92} & \multicolumn{1}{c}{27.12} & \multicolumn{1}{c}{-}    & \multicolumn{1}{c|}{-}    & \multicolumn{1}{c}{47.89} & \multicolumn{1}{c}{53.33} & \multicolumn{1}{c}{-}    & \multicolumn{1}{c|}{-}    & \multicolumn{1}{c}{28.80} & \multicolumn{1}{c}{33.59} & \multicolumn{1}{c}{-}    & \multicolumn{1}{c|}{-}    & \multicolumn{1}{c}{33.63} & \multicolumn{1}{c}{43.32} & \multicolumn{1}{c}{-}    & \multicolumn{1}{c}{-}    \\
SegNet                  & \multicolumn{1}{c}{21.01} & \multicolumn{1}{c}{24.60} & \multicolumn{1}{c}{-}    & \multicolumn{1}{c|}{-}    & \multicolumn{1}{c}{46.60} & \multicolumn{1}{c}{50.18} & \multicolumn{1}{c}{-}    & \multicolumn{1}{c|}{-}    & \multicolumn{1}{c}{31.67} & \multicolumn{1}{c}{37.15} & \multicolumn{1}{c}{-}    & \multicolumn{1}{c|}{-}    & \multicolumn{1}{c}{43.13} & \multicolumn{1}{c}{53.87} & \multicolumn{1}{c}{-}    & \multicolumn{1}{c}{-}    \\
SegFormer               & \multicolumn{1}{c}{-}     & \multicolumn{1}{c}{-}     & \multicolumn{1}{c}{-}    & \multicolumn{1}{c|}{-}    & \multicolumn{1}{c}{39.37} & \multicolumn{1}{c}{46.23} & \multicolumn{1}{c}{-}    & \multicolumn{1}{c|}{-}    & \multicolumn{1}{c}{-}     & \multicolumn{1}{c}{-}     & \multicolumn{1}{c}{-}    & \multicolumn{1}{c|}{-}    & \multicolumn{1}{c}{39.37} & \multicolumn{1}{c}{46.23} & \multicolumn{1}{c}{-}    & \multicolumn{1}{c}{-}    \\
AttaNet                 & 20.89                     & 24.85                     & 26.64                    & 25.67                     & 51.12                     & 58.89                     & 58.83                    & 60.96                     & 28.97                     & 34.46                     & 35.63                    & 34.54                     & 22.96                     & 27.28                     & 26.11                    & 30.87                    \\
HRDA                    & 22.19                     & 27.27                     & 34.39                    & 26.94                     & 64.42                     & 75.65                     & 83.66                    & 71.80                     & 34.98                     & 43.10                     & 52.13                    & 40.74                     & 38.89                     & 49.38                     & 64.30                    & 45.80                    \\
TopFormer               & 20.84                     & 24.71                     & 26.37                    & 25.70                     & 48.50                     & 57.02                     & 59.34                    & 57.52                     & 32.32                     & 38.22                     & 41.61                    & 37.20                     & 22.15                     & 26.70                     & 25.48                    & 30.24                    \\
SeaFormer               & 20.58                     & 24.53                     & 25.54                    & 25.28                     & 47.85                     & 56.62                     & 56.22                    & 58.65                     & 28.68                     & 34.24                     & 37.76                    & 33.20                     & 20.51                     & 24.02                     & 23.36                    & 26.85                    \\ \midrule
ViT+ASPP                & \multicolumn{1}{c}{17.11} & \multicolumn{1}{c}{21.18} & \multicolumn{1}{c}{-}    & \multicolumn{1}{c|}{-}    & \multicolumn{1}{c}{68.12} & \multicolumn{1}{c}{77.01} & \multicolumn{1}{c}{-}    & \multicolumn{1}{c|}{-}    & \multicolumn{1}{c}{31.64} & \multicolumn{1}{c}{37.58} & \multicolumn{1}{c}{-}    & \multicolumn{1}{c|}{-}    & \multicolumn{1}{c}{26.31} & \multicolumn{1}{c}{30.37} & \multicolumn{1}{c}{-}    & \multicolumn{1}{c}{-}    \\ \midrule
LinearMoE               & 21.92                     & 26.76                     & 37.23                    & 26.79                     & 71.07                     & 81.36                     & 84.96                    & 79.74                     & 34.97                     & 42.95                     & 52.48                    & 40.57                     & 39.80                     & 51.53                     & 66.07                    & 47.62                    \\
NonLinearMoE            & 22.32                     & 27.30                     & 37.18                    & 27.03                     & 72.09                     & 82.28                     & \bestcell{85.66}                    & 80.40                     & 35.41                     & 43.50                     & 54.08                    & 41.38                     & 41.74                     & 54.07                     & 65.90                    & 50.15                    \\
SoftMoE                 & 22.22                     & 27.11                     & \bestcell{39.02}                    & 26.95                     & 71.54                     & 81.71                     & 85.20                    & 79.54                     & 36.21                     & 44.37                     & 53.39                    & 42.14                     & 42.08                     & 54.45                     & 66.93                    & 49.79                    \\ \midrule
\textbf{AirMoE (Ours)}          & \bestcell{24.75}                     & \bestcell{28.47}                     & 38.35                    & \bestcell{29.48}                     & \bestcell{72.43}                     & \bestcell{82.89}                     & 85.31                    & \bestcell{80.92}                     & \bestcell{37.97}                     & \bestcell{45.56}                     & \bestcell{55.19}                    & \bestcell{43.39}                     & \bestcell{44.74}                     & \bestcell{55.99}                     & \bestcell{68.30}                    & \bestcell{53.47}                    \\ \bottomrule
\end{tabularx}
\label{tab:perf_comp}
\vspace{-0.3cm}
\end{table*}

\begin{figure*}[tp]
\centering
\subfloat[\footnotesize mIoU]{\includegraphics[width=0.5\linewidth, height=0.3\linewidth]{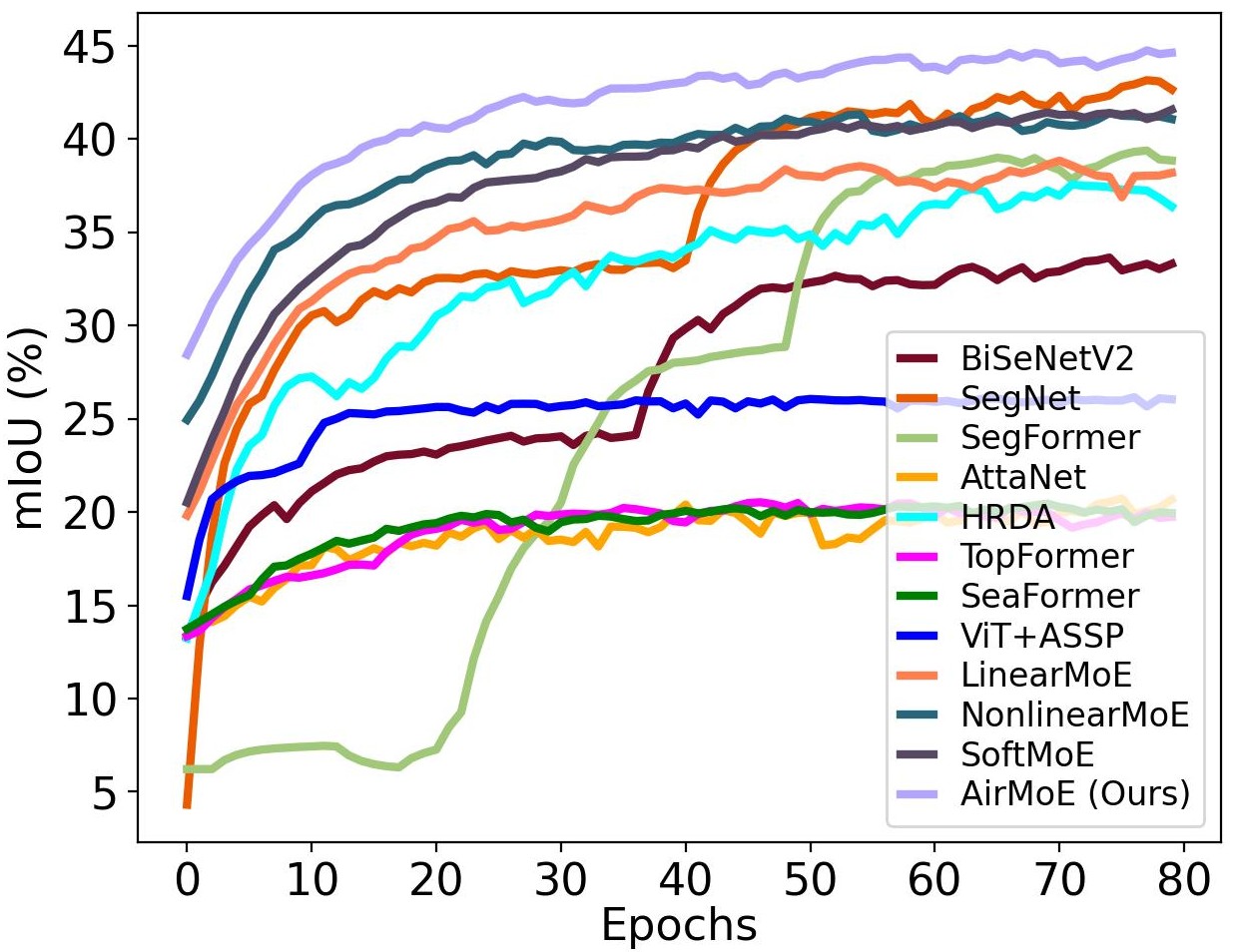}%
\label{Fig:city_mIoU}}
\subfloat[\footnotesize mF1]{\includegraphics[width=0.5\linewidth, height=0.3\linewidth]{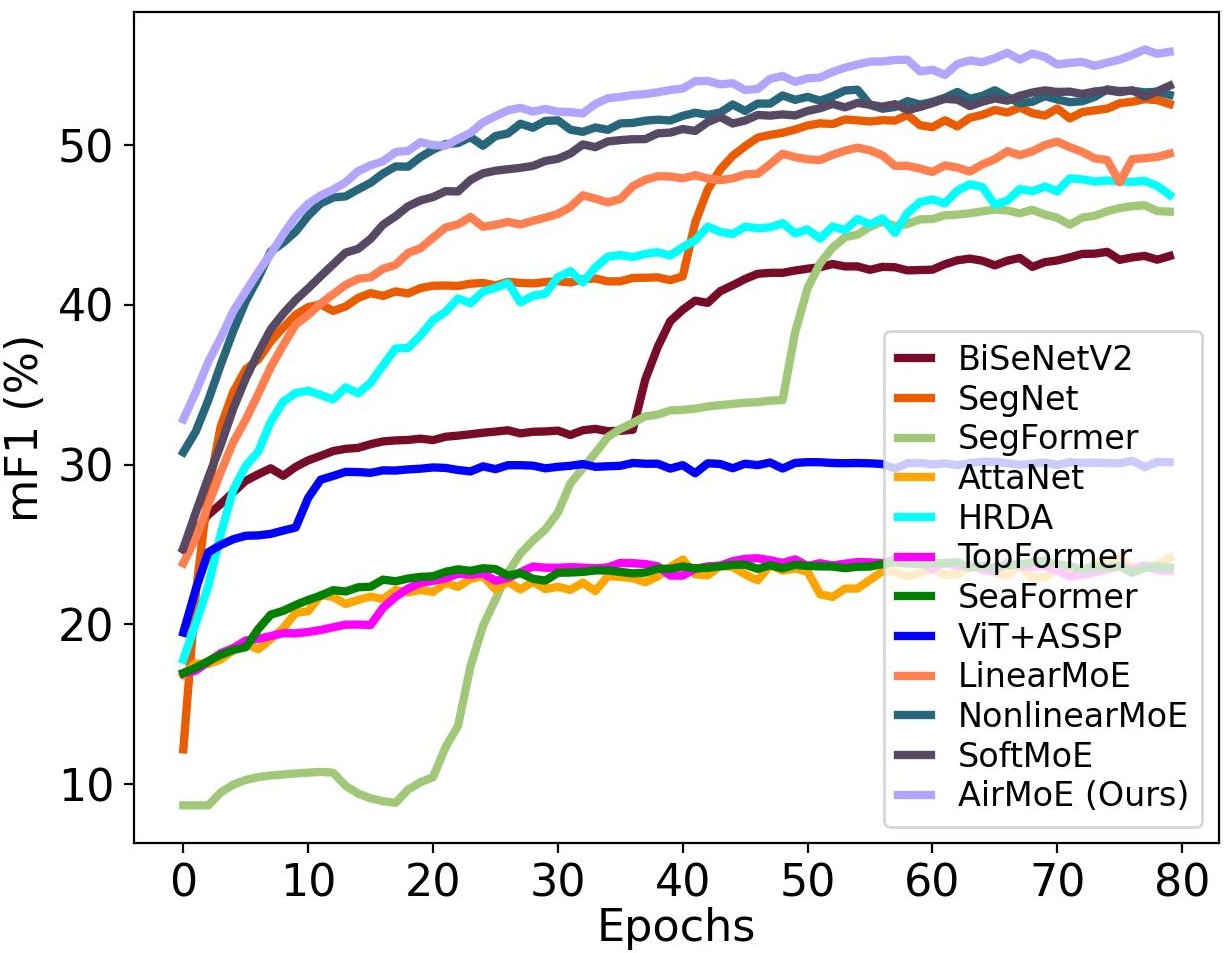}%
\label{Fig:city_mF1}}
\caption{Convergence comparison of the proposed AirMoE against all baselines on Cityscapes dataset.}
\label{Fig:convergece_comp}
\vspace{-0.4cm}
\end{figure*}

For evaluation, on the one hand, we compare the proposed AirMoE with other MoE routing strategies, such as LinearMoE \cite{fan2022m3vit}, NonlinearMoE \cite{nguyen2024statistical}, and SoftMoE \cite{puigcerver2024sparse}. For this MoE-to-MoE comparison, all methods share the same frozen ViT backbone, the same expert architecture, and the same training protocol, yet only the routing/aggregation rule is changed. This isolates the effect of the proposed MoE-RM and MoE-AM. On the other hand, we also compare AirMoE with other single-model methods, such as BiSecNetV2 \cite{yu2021bisenet}, SegNet \cite{badrinarayanan2017segnet}, SegFormer \cite{xie2021segformer}, AttaNet \cite{song2021attanet}, HRDA \cite{hoyer2023domain}, TopFormer \cite{zhang2022topformer}, and SeaFormer \cite{wan2023seaformer}. This comparison is intended to position AirMoE against representative semantic segmentation models.

Notably, the proposed AirMoE, all MoE baselines, and all single-model competitors are implemented using the PyTorch framework and trained using NVIDIA GeForce 4090 GPUs. For MoE-based models, including AirMoE, LinearMoE, NonlinearMoE, and SoftMoE, a pretrained ViT backbone is used and frozen, with only the experts (\ie, ASPPs) within the MoE being trained. In contrast, all single-model competitors are trained from scratch using the aforementioned datasets. Relative to prior MoE baselines, AirMoE introduces extra FRL retrieval and JS-divergence computations, but these operations are performed on compact feature summaries instead of re-running the backbone. 

\subsection{Main Results and Empirical Analyses} \label{sec:exp_comp_analyses}

\subsubsection{Performance Evaluation}
\Cref{tab:perf_comp} compares the main results of the proposed AirMoE against the aforementioned baselines. To interpret the results in \Cref{tab:perf_comp}, instead of reporting the numbers cell-by-cell, we organize the analyses around four research questions (RQs) that progressively isolate the source and nature of AirMoE's advantage across all adopted datasets.

\textbf{RQ1: Does AirMoE deliver consistent gains, or are its improvements confined to particular datasets or metrics?}
AirMoE attains the best mIoU and mF1 on every dataset without exception, improving mIoU over the strongest prior baseline by $+2.43$ on Apolloscapes ($24.75$ vs.\ NonLinearMoE's $22.32$), $+0.34$ on CamVid ($72.43$ vs.\ NonLinearMoE's $72.09$), $+1.76$ on CARLA\_ADV ($37.97$ vs.\ SoftMoE's $36.21$), and $+2.66$ on Cityscapes ($44.74$ vs.\ SoftMoE's $42.08$). Because mIoU and mF1 are the two threshold-independent, class-balanced measures of segmentation quality, their simultaneous first-place ranking across all four benchmarks answers RQ1 affirmatively, \ie, the gains are systematic rather than a dataset-specific or metric-specific artifact.

\textbf{RQ2: Do the improvements stem from the MoE paradigm in general, or to AirMoE's specific design?}
A clear three-tier hierarchy is visible across the method families. The MoE-based models (LinearMoE, NonLinearMoE, SoftMoE, AirMoE) uniformly dominate both the single-model segmentors and the ViT+ASPP baseline, confirming that MoE-based models are inherently better suited to the heterogeneous, multi-domain nature of semantic segmentation scenes. Crucially, within this MoE family, AirMoE still advances the frontier on the primary metrics, which isolates the contribution of its over-the-air aggregation and prototype-guided FRL design from the generic MoE mechanism. RQ2 is thus answered on two levels: MoE certainly helps, and AirMoE's particular design helps beyond that.

\textbf{RQ3: What is the preference of AirMoE's advantage in the precision--recall space?}
The advantage is concentrated in recall. Specifically, AirMoE achieves the highest mRec on all four datasets (e.g.\ $+3.32$ on Cityscapes over SoftMoE and $+2.45$ on Apolloscapes over NonLinearMoE), while its precision is marginally edged out by two cases (\ie, SoftMoE's $39.02$ vs.\ $38.35$ on Apolloscapes, and NonLinearMoE's $85.66$ vs.\ $85.31$ on CamVid). This profile is not a weakness but a favorable trade. The two competing methods raise small precision at the cost of lower recall, whereas AirMoE's balanced expert utilization broadens semantic coverage so that its overall mIoU and mF1 remain superior despite the tiny precision gap. RQ3 therefore reveals that AirMoE's mechanism improves coverage of under-represented regions rather than merely shifting a decision threshold.

\begin{figure*}[tp]
\includegraphics[width=\linewidth, height=0.5\linewidth]{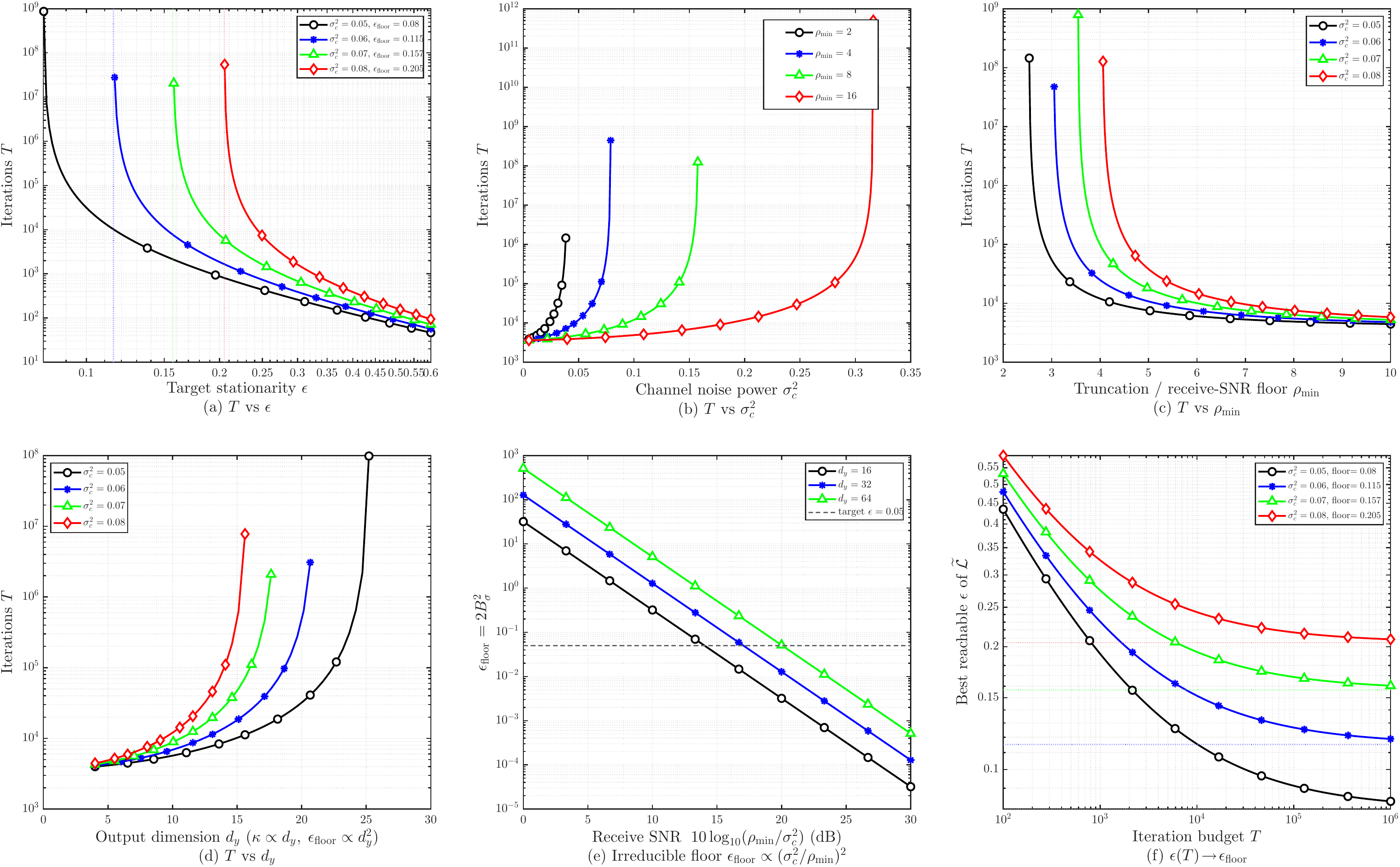}
\vspace{-0.5cm}
\caption{Illustration of the relationship of $\epsilon$-stationarity complexity iteration with respect to various involved factors.}
\label{fig:AirMoE-complexity}
\vspace{-0.3cm}
\end{figure*}

\textbf{RQ4: Where does AirMoE help most, and does this align with practical difficulty?}
Ordering datasets' difficulty by the best baseline mIoU yields Apolloscapes ($22.32$) $>$ CARLA\_ADV ($36.21$) $>$ Cityscapes ($42.08$) $>$ CamVid ($72.09$). AirMoE's relative mIoU improvement tracks this ordering closely: $+10.9\%$ on Apolloscapes ($22.32\!\rightarrow\!24.75$), $+6.3\%$ on Cityscapes ($42.08\!\rightarrow\!44.74$), $+4.9\%$ on CARLA\_ADV ($36.21\!\rightarrow\!37.97$), and near-saturated $+0.5\%$ on the easiest benchmark CamVid ($72.09\!\rightarrow\!72.43$). mRec gains exhibit the same monotone pattern, rising from $+0.52$ on CamVid to $+2.45$ on Apolloscapes and $+3.32$ on Cityscapes over the respective strongest baseline. These cues are the key evidence for RQ4: AirMoE contributes the most precisely on the low-score, high-confusion datasets where competing models saturate or collapse on under-represented classes, and the most on the recall axis that governs coverage of those rare classes. Consequently, AirMoE's benefit scales with task difficulty rather than generating gains on already-easy scenes, which underscores the practical robustness of the proposed AirMoE for real-world semantic segmentation.

\subsubsection{Convergence Comparison}
We also compare the convergence of the proposed AirMoE against other MoE routing baselines and single-model competitors, and the results can be viewed in \Cref{Fig:convergece_comp}. From \Cref{Fig:convergece_comp}, we can derive the following insights: (i) ViT backbone-based methods (including ViT+ASPP, LinearMoE, NonlinearMoE, SoftMoE, and our AirMoE) generally converge faster than single-model approaches. This can be attributed to the pretraining of ViT on a vast number of datasets. (ii) The proposed AirMoE converges faster than other MoE routing baselines, thanks to the statistic-augmented, decoupled routing and aggregating strategies.

\subsection{AirMoE's $\epsilon$-Stationarity Iteration Complexity} \label{sec:exp_convergence_exploration}

This section explores how involved factors impact the $\epsilon$-stationarity iteration complexity of the noiseless objective $\widetilde{\mathcal{L}}$. The specific feature of $\epsilon$-stationarity is that over-the-air aggregation injects a systematic bias whose squared magnitude is $\epsilon_{\mathrm{floor}}$, and the SGD can certify progress only until the expected squared gradient norm reaches this residual level. \Cref{fig:AirMoE-complexity} dissects how each physical and algorithmic factor acts.

\subsubsection{Effect of the target accuracy $\epsilon$}
Panel~(a) of \Cref{fig:AirMoE-complexity} reports the iteration complexity $T$ as a function of the target stationarity accuracy $\epsilon$, with one curve per channel-noise level $\sigma_c^{2}$. The following regimes are visible. When $\epsilon\gg\epsilon_{\mathrm{floor}}$, the term $(\epsilon-\epsilon_{\mathrm{floor}})^{2}$ in the denominator of \Cref{eq:eps_noiseless_complexity} is dominated by $\epsilon^{2}$, thus the bound reduces to the classical non-convex rate $T=\mathcal{O}(\epsilon^{-2})$ and the channel merely rescales the constant. As $\epsilon$ decreases toward the floor, the correction $\epsilon_{\mathrm{floor}}$ becomes non-negligible, the denominator shrinks faster than $\epsilon^{2}$, and $T$ grows super-quadratically. Finally, as $\epsilon$ approaches downward to $\epsilon_{\mathrm{floor}}$, the denominator vanishes and $T\to\infty$, producing the vertical asymptote at $\epsilon_{\mathrm{floor}}$. Furthermore, when $\epsilon\le\epsilon_{\mathrm{floor}}$, no finite $T$ satisfies the bound and the target is infeasible.

This behavior is the algorithmic counterpart of the dichotomy in Lemma~\ref{cor:eps_noiseless}. Over-the-air aggregation contributes a systematic bias to the gradient of the true objective whose squared norm is exactly $\epsilon_{\mathrm{floor}}=2B_\sigma^{2}$. The SGD can only certify progress while the expected squared gradient norm exceeds this bias. Once it falls to the level of the bias, the true gradient is no longer distinguishable from the channel-induced perturbation, and further decrease cannot be guaranteed. The floor is therefore an irreducible accuracy barrier rather than a slow-convergence artifact, and it cannot be overcome by any additional iterations, only by reducing the bias itself.

\subsubsection{Effect of the channel-noise power $\sigma_c^{2}$}
Panel~(b) of \Cref{fig:AirMoE-complexity} fixes the target accuracy $\epsilon$ and sweeps the channel-noise power $\sigma_c^{2}$. This is the decisive factor in \Cref{eq:eps_noiseless_complexity} because it is the only quantity that enters both the numerator $\sigma_{\mathrm{eff}}^{2}$ and the floor $\epsilon_{\mathrm{floor}}$, and it enters them with different orders with compounding effects. Specifically, for the numerator, $\sigma_c^{2}$ appears linearly through the effective variance $\sigma_{\mathrm{eff}}^{2}=\sigma^{2}+\kappa\sigma_c^{2}/(2\rho_{\min})$, reflecting the real-component receiver noise that inflates the variance of the aggregated stochastic gradient. Acting alone, this term would raise $T$ only linearly, exactly as the intrinsic sampling variance $\sigma^{2}$ does. On the other hand, for the floor, $\sigma_c^{2}$ appears with a strictly higher order $\epsilon_{\mathrm{floor}}=2B_\sigma^{2}\propto(\sigma_c^{2})^{2}$, i.e.\ quadratic in the noise power and quartic in the noise amplitude.

The interaction of these two orders explains the shape of each curve. For small $\sigma_c^{2}$, the floor is negligible relative to $\epsilon$, the denominator $(\epsilon-\epsilon_{\mathrm{floor}})^{2}\approx\epsilon^{2}$ is essentially constant, and $T$ rises only linearly through $\sigma_{\mathrm{eff}}^{2}$. As $\sigma_c^{2}$ increases, the quadratic floor grows far faster than the linear numerator; the denominator begins to collapse, and $T$ turns sharply super-linear. Once $\epsilon_{\mathrm{floor}}(\sigma_c^{2})$ reaches the fixed target $\epsilon$, the denominator vanishes, $T\to\infty$ and the curve terminates. This monotone-then-divergent profile quantifies the compounding penalty of noisy analog aggregation, \ie, channel noise degrades not only the convergence rate (via the numerator) but also the best attainable accuracy (via the floor).

\subsubsection{Effect of the SNR floor $\rho_{\min}$}
Panel~(c) of \Cref{fig:AirMoE-complexity} sweeps the receive-SNR floor $\rho_{\min}$ that lower-bounds the effective SNR ratio of the aggregated update. Whereas the channel-noise power $\sigma_c^{2}$ is an environmental quantity that we cannot control, $\rho_{\min}$ is a design parameter, and it enters $\Cref{eq:eps_noiseless_complexity}$ only through the ratio $\sigma_c^{2}/\rho_{\min}$. Raising $\rho_{\min}$ is thus equivalent to improving the effective SNR and acts favorably in both terms of the bound but with different orders. In the numerator, $\rho_{\min}$ suppresses the effective variance linearly, since $\sigma_{\mathrm{eff}}^{2}=\sigma^{2} +\kappa\,\sigma_c^{2}/(2\rho_{\min})$. It accelerates the $T^{-1/2}$ approach to stationarity but saturates once the channel contribution $\kappa\sigma_c^{2}/(2\rho_{\min})$ falls below the intrinsic sampling variance $\sigma^{2}$. In the floor, as $\epsilon_{\mathrm{floor}}\propto\rho_{\min}^{-2}$, $\rho_{\min}$ acts more strongly, because the aggregation bias $B_\sigma$ scales with $\sigma_c^{2}/\rho_{\min}$ and the floor is its square. Increasing $\rho_{\min}$ therefore lowers the irreducible accuracy barrier quadratically. The two effects combine monotonically, \ie, raising $\rho_{\min}$ both shortens the trajectory (via the numerator) and lowers the floor it approaches (via the denominator), thus $T$ decreases throughout. 

The benefit is most pronounced in the high-noise regime, where the floor $\epsilon_{\mathrm{floor}}\propto(\sigma_c^{2}/\rho_{\min})^{2}$ dominates the denominator. Conversely, in the low-noise regime, the floor is already negligible against $\epsilon$, only the linear numerator term remains active, and the marginal value of increasing $\rho_{\min}$ diminishes. This asymmetry identifies $\rho_{\min}$ as the primary design lever under a hostile channel.

\subsubsection{Effect of the output dimension $d_y$}
Panel~(d) of \Cref{fig:AirMoE-complexity} varies the output dimension $d_y$, which enters $\Cref{eq:eps_noiseless_complexity}$ at two distinct orders. In the numerator, it appears linearly through $\kappa=G^{2}L_y^{2}d_y$ in the effective variance $\sigma_{\mathrm{eff}}^{2}$. In the floor, it appears quadratically ($\epsilon_{\mathrm{floor}}\propto d_y^{2}$), because the aggregation bias $B_\sigma$ itself grows linearly with $d_y$ and the floor is the squared bias $2B_\sigma^{2}$. These two orders produce competition. For small $d_y$, the floor is negligible relative to the target $\epsilon$ and the denominator $(\epsilon-\epsilon_{\mathrm{floor}})^{2}\approx\epsilon^{2}$ is nearly constant, thus $T$ rises only linearly through $\sigma_{\mathrm{eff}}^{2}$. As $d_y$ grows, the quadratic floor overtakes the linear numerator and the denominator collapses, thus, $T$ diverges once $\epsilon_{\mathrm{floor}}(d_y)$ reaches the fixed target $\epsilon$.

This behavior exposes a curse of dimensionality intrinsic to over-the-air aggregation. Specifically, transmitting higher-dimensional updates over the analog channel narrows the feasible-accuracy region quadratically in $d_y$, which motivates dimensionality-reducing measures (e.g.\ sparsification or subspace projection) as means of lowering the floor rather than merely the convergence rate.

\subsubsection{The irreducible floor $\epsilon_{\mathrm{floor}}$ vs.\ receive SNR}
Panel~(e) of \Cref{fig:AirMoE-complexity} isolates the irreducible floor in \Cref{eq:floor} and plots it against the receive SNR. Because the floor depends on the channel solely through $\sigma_c^{2}/\rho_{\min}$, with $\epsilon_{\mathrm{floor}}\propto(\sigma_c^{2}/\rho_{\min})^{2} =10^{-\mathrm{SNR}/5}$, its logarithm is affine in the SNR, \ie,
\begin{equation}
    10\log_{10}\epsilon_{\mathrm{floor}}
    =\mathrm{const}-2\,\mathrm{SNR}.
\end{equation}
Therefore, on the log--log axes, each curve is a straight line of slope $-20$\,dB per decade of SNR. Crucially, this slope is universal, and it is set by the squaring of the bias and is independent of the numerator constants $L$, $\Delta$, $M/m$, and $\sigma_{\mathrm{eff}}^{2}$.

The dashed horizontal line marks a representative target accuracy $\epsilon$. Its intersection with a given floor curve determines the minimum receive SNR at which that target is attainable: to the right of the intersection $\epsilon_{\mathrm{floor}}<\epsilon$ and the target is feasible (finite $T$), whereas to the left $\epsilon_{\mathrm{floor}}\ge\epsilon$ and it is unattainable for any iteration budget. This intersection is precisely the SNR at which the vertical asymptote of panel~(a) coincides with the chosen $\epsilon$, therefore, panels~(a) and~(e) describe the same feasibility boundary from complementary viewpoints, \ie, $T$ diverging as $\epsilon$ approaches downward to $\epsilon_{\mathrm{floor}}$ in~(a), and $\epsilon_{\mathrm{floor}}$ crossing $\epsilon$ as the SNR falls in~(e).

In addition, increasing the output dimension $d_y$ shifts every curve upward by $20\log_{10}d_y$\,dB, since $\epsilon_{\mathrm{floor}}\propto d_y^{2}$ without changing the $-20$\,dB/decade slope. Higher-dimensional updates therefore demand a correspondingly higher receive SNR to reach the same target accuracy, which restates the curse of dimensionality of panel~(d) directly in terms of the SNR budget required for feasibility.

\subsubsection{Best reachable accuracy vs.\ iteration budget $T$}
Panel~(f) of \Cref{fig:AirMoE-complexity} takes the complementary view to the preceding panels, \ie, instead of asking how many iterations a target $\epsilon$ requires, it asks for the tightest accuracy attainable given a fixed budget $T$. Inverting \Cref{eq:eps_noiseless_complexity} for $\epsilon$ gives
\begin{equation}
    \epsilon(T)=\underbrace{\epsilon_{\mathrm{floor}}}_{\text{irreducible}}
    +\underbrace{\sqrt{\frac{C\,L\,M^{2}\,\Delta\,\sigma_{\mathrm{eff}}^{2}}
    {m^{2}\,T}}}_{\text{transient, }\;\mathcal{O}(T^{-1/2})}.
    \label{eq:eps-of-T}
\end{equation}
The decomposition separates the two roles. The transient term decays at the classical stochastic rate $T^{-1/2}$ and carries all of the numerator dependence, including the smoothness $L$, the initial gap $\Delta$, the effective variance $\sigma_{\mathrm{eff}}^{2}$, and the squared preconditioner conditioning $M^{2}/m^{2}$. Each of these enlarges the constant and hence slows the descent, but none changes the value the curve approaches. The irreducible term $\epsilon_{\mathrm{floor}}$ is independent of $T$ and sets that limiting value.

The decisive feature is therefore the asymptote (dotted lines). As $T\to\infty$, the transient term vanishes and $\epsilon(T)$ approaches downward to $\epsilon_{\mathrm{floor}}$ rather than $0$. No iteration budget, no matter how large, can drive the noiseless objective below the channel-induced floor. The curve is eventually flat and further iterations are wasted.

This panel thus consolidates the central message of \Cref{fig:AirMoE-complexity}. Two families of parameters play qualitatively distinct and non-interchangeable roles, where the optimization constants $(L,\Delta,\sigma_{\mathrm{eff}}^{2},M/m)$ govern only the speed at which the floor is approached, whereas the channel parameters $(\sigma_c^{2},\rho_{\min},d_y)$ set the floor itself and hence the fundamental limit of achievable stationarity. Improving the former accelerates convergence toward a fixed barrier, and improving the latter is the only way to lower the barrier. Panels~(a) and~(e) locate this barrier from the accuracy and SNR axes, respectively, and panel~(f) shows its consequence along the iteration axis, giving a single coherent account of feasibility for over-the-air aggregation.

\begin{figure}[tp]
\includegraphics[width=0.95\linewidth]{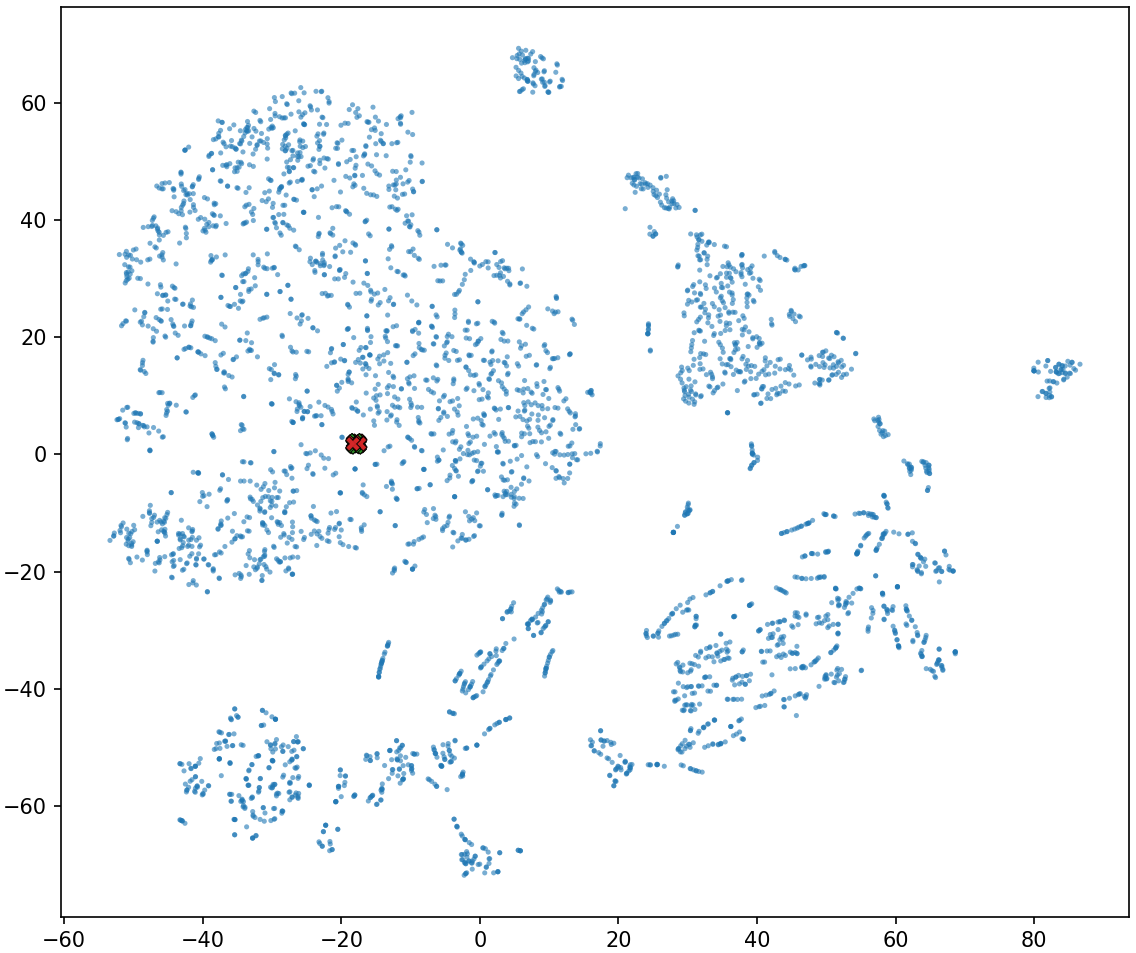}
\vspace{-0.2cm}
\centering
\caption{Relationship visualization between ViT-extracted features and expert-wise FRL prototypes.}
\label{Fig:AirMoE_visualization}
\vspace{-0.3cm}
\end{figure}

\subsection{Visualization of the Relationship between ViT-extracted Features and Expert-wise FRL Prototypes} \label{sec:AirMoE_visualization}
This section visualizes the relationship between cloud-side ViT-extracted features and the most relevant expert's FRL prototypes by t-SNE. The result is presented in \Cref{Fig:AirMoE_visualization}. For clarity, we merely visualize sampled cloud-side shared tokens and the most aligned client's prototypes. Specifically, each plot overlays randomly sampled ViT token features, represented by smaller points, together with the prototypes stored in the single top-routed expert, depicted as larger ``X'' markers. Because each expert FRL contains specific expertise, these ``X'' markers appear quite close in the plot. As indicated by \Cref{Fig:AirMoE_visualization}, the learned prototypes of the most relevant expert align closely with the majority of the ViT-extracted token features, demonstrating the effectiveness of the proposed expert-wise FRLs.

\begin{table}[tp]
\setlength{\tabcolsep}{6.5pt}
\caption{Comparison among cases with different expert numbers}
\begin{tabularx}{\linewidth}{c|c|cccc}
\toprule
\multirow{2}{*}{\begin{tabular}[c]{@{}c@{}}Expert\\ Number\end{tabular}} & \multirow{2}{*}{\begin{tabular}[c]{@{}c@{}}Other Fixed\\ Settings\end{tabular}}                                                                                     & \multicolumn{4}{c}{Cityscapes} \\ \cline{3-6} 
                                                                         &                                                                                                                                                                          & mIoU   & mF1   & mPre  & mRec  \\ \midrule
7                                                                        & \multirow{4}{*}{\begin{tabular}[c]{@{}c@{}}TopK $=5$,\\ FRL prototypes $=16$,\\ $\lambda_{\text{LB}}=0.01$,\\ $\lambda_{\text{FRL}}=10^{-4}$\end{tabular}} & \bestcell{45.45}  & \bestcell{56.76} & \bestcell{71.65} & 53.22 \\
8                                                                        &                                                                                                                                                                          & 45.02  & 56.08 & 70.19 & 53.68 \\
9                                                                        &                                                                                                                                                                          & 44.99  & 55.82 & 69.07 & \bestcell{53.87} \\
10                                                                       &  & 44.74                     & 55.99                     & 68.30                    & 53.47  \\ \bottomrule 
\end{tabularx}
\label{tab:abl_expert_num}
\end{table}

\subsection{Ablation Study} \label{sec:ablation_study}
This section investigates how AirMoE-embedded hyperparameters (including the number of experts, the TopK value, the FRL prototype number, $\lambda_{\text{LB}}$, and $\lambda_{\text{FRL}}$) affect the overall performance of AirMoE. These ablations should be interpreted jointly, where the number of experts and the TopK value together determine routing sparsity, the FRL size controls how sharp or diffuse each expert memory becomes, and $\lambda_{\text{LB}}$ along with $\lambda_{\text{FRL}}$ controls the regularization strength. Notably, we keep the FRL update rate $\eta$ fixed throughout this paper.

\subsubsection{The Influence of the Number of Experts on AirMoE Performance}
\Cref{tab:abl_expert_num} reports the influence of the number of experts on AirMoE. From this table, we can observe the following patterns: (i) As the expert pool grows from $7$ to $10$ under a fixed TopK$=5$, mIoU declines steadily from $45.45$ to $44.74$, mF1 from $56.76$ to $55.99$, and mPre from $71.65$ to $68.30$. This behavior can be interpreted through the activation ratio $K/N$. With $K=5$, enlarging $N$ lowers the activation ratio $K/N$ from $5/7$ to $5/10$, therefore, each expert is selected and updated less frequently, resulting in declined performance. (ii) It is also notable that the degradation is not uniform across metrics, where mRec stays essentially flat and even peaks at $N=9$ ($53.87$), whereas mPre drops the most sharply ($-3.35$). This asymmetry indicates that a larger, more diffuse expert population still covers the relevant semantic regions (preserving recall) but weakens per-region confidence (eroding precision). Because $N=7$ sits at the lower boundary of the search range, we refrain from claiming global optimality. Instead, the consistent decline suggests that AirMoE favors a compact expert pool whose activation ratio remains high.

\begin{table}[tp]
\setlength{\tabcolsep}{7.2pt}
\caption{Comparison among cases with different TopK values}
\begin{tabularx}{\linewidth}{c|c|cccc}
\toprule
\multirow{2}{*}{\begin{tabular}[c]{@{}c@{}}TopK\\ Value\end{tabular}} & \multirow{2}{*}{\begin{tabular}[c]{@{}c@{}}Other Fixed\\ Settings\end{tabular}}                                                                                               & \multicolumn{4}{c}{Cityscapes}                                                                           \\ \cline{3-6} 
                                                                      &                                                                                                                                                                                    & \multicolumn{1}{c}{mIoU} & \multicolumn{1}{c}{mF1} & \multicolumn{1}{c}{mPre} & \multicolumn{1}{c}{mRec} \\ \midrule
3                                                                     & \multirow{4}{*}{\begin{tabular}[c]{@{}c@{}}Experts $=10$,\\ FRL prototypes $=16$,\\ $\lambda_{\text{LB}}=0.01$,\\ $\lambda_{\text{FRL}}=10^{-4}$\end{tabular}} & \bestcell{45.85}                    & \bestcell{56.75}                   & \bestcell{69.91}                    & \bestcell{53.69}                    \\
4                                                                     &                                                                                                                                                                                    & 45.46                    & 55.47                   & 68.82                    & 53.41                    \\
5                                                                     &  & 44.74                     & 55.99                     & 68.30                    & 53.47                                                                                                                                                                                                            \\
6                                                                     &                                                                                                                                                                                    & 44.33                    & 55.65                   & 67.44                    & 52.48                    \\ \bottomrule
\end{tabularx}
\label{tab:abl_topk_value}
\vspace{-0.3cm}
\end{table}

\begin{table}[tp]
\setlength{\tabcolsep}{6.8pt}
\caption{Comparison among cases with various FRL prototype numbers}
\begin{tabularx}{\linewidth}{c|c|cccc}
\toprule
\multirow{2}{*}{\begin{tabular}[c]{@{}c@{}}FRL Prototype\\ Number\end{tabular}} & \multirow{2}{*}{\begin{tabular}[c]{@{}c@{}}Other Fixed\\ Settings\end{tabular}}                                                                              & \multicolumn{4}{c}{Cityscapes} \\ \cline{3-6} 
                                                                                &                                                                                                                                                                   & mIoU   & mF1   & mPre  & mRec  \\ \midrule
4                                                                               & \multirow{4}{*}{\begin{tabular}[c]{@{}c@{}}Experts $=10$,\\ TopK $=5$,\\ $\lambda_{\text{LB}}=0.01$,\\ $\lambda_{\text{FRL}}=10^{-4}$\end{tabular}} & \bestcell{45.59}  & \bestcell{57.35} & \bestcell{70.76} & 53.84 \\
8                                                                               &                                                                                                                                                                   & 45.27  & 56.92 & 69.16 & 53.68 \\
12                                                                              &                                                                                                                                                                   & 44.47  & 55.25 & 67.99 & \bestcell{54.04} \\
16                                                                              &                                                                                                                                                                   & 44.74                     & 55.99                     & 68.30                    & 53.47       \\ \bottomrule
\end{tabularx}
\label{tab:abl_FRL_proto_num}
\vspace{-0.3cm}
\end{table}

\subsubsection{The Effect of the TopK Value on AirMoE Performance}
\Cref{tab:abl_topk_value} examines the routing sparsity by sweeping TopK from $3$ to $6$ while fixing the expert pool at $N=10$. The sparsest configuration of TopK$=3$ dominates every metric (\ie, $45.85$ in mIoU, $56.75$ in mF1, $69.91$ in mPre, $53.69$ in mRec), and increasing $K$ degrades performance steadily until TopK$=6$. The reason is that a larger $K$ forces the router to admit lower-affinity experts beyond the well-matched top ones. These marginal experts contribute noisy logits into the fused output, which is exactly why mPre falls most sharply ($-2.47$ from $K{=}3$ to $K{=}6$) while mRec also erodes ($-1.21$). This confirms that AirMoE benefits from decisive, low-$K$ routing rather than broad ensembling. This is further reinforced from the over-the-air perspective. Each additionally activated expert must transmit its contribution over the shared channel, hence, a larger $K$ raises aggregation noise and power contention under the truncated channel-inversion scaling in \Cref{eq:pc}, compounding the model-side dilution with a communication-side penalty. In summary, AirMoE prefers a small number of routing experts.

\subsubsection{The Impact of the Number of FRL Prototypes on AirMoE Performance}
\Cref{tab:abl_FRL_proto_num} probes the capacity of FRL module by varying its prototype count from $4$ to $16$. The dominant trend here reveals a preference for a small prototype set. The smallest configuration of $4$ prototypes achieves the best mIoU ($45.59$), mF1 ($57.35$), and mPre ($70.76$), and enlarging the prototype bank generally erodes performance. This is consistent with a representation-granularity argument. That is, each prototype acts as a learnable anchor to partition the feature space, and when the number of prototypes exceeds the number of genuinely required modes in the data, the surplus anchors fragment the feature assignment and dilute the discriminative signal that the FRL regularizer is meant to sharpen. The precision column again reflects this most clearly ($-2.77$ from $4$ to $12$ prototypes), since redundant prototypes blur the boundaries between confidently separated classes. Importantly, the degradation is not strictly monotonic. Concretely, mIoU recovers slightly from $44.47$ at $12$ prototypes to $44.74$ at $16$, and mRec is in fact maximized at $12$ ($54.04$). We attribute this mild rebound to a two-regime behavior, wherein an oversized bank first introduces harmful redundancy but a sufficiently large bank ($16$) begins to re-specialize its anchors into finer sub-clusters, partially restoring assignment stability. The recall peak at $12$ likewise indicates that a denser prototype set can marginally broaden region coverage even as it sacrifices precision. In conclusion, a compact prototype set best matches the intrinsic semantic complexity of the task, and we therefore adopt a small prototype number to keep the FRL representation both discriminative and computationally economical.

\begin{table}[tp]
\setlength{\tabcolsep}{7.5pt}
\caption{Comparison among cases with different $\lambda_{\text{LB}}$ values}
\begin{tabularx}{\linewidth}{c|c|llll}
\toprule
\multirow{2}{*}{$\lambda_{\text{LB}}$} & \multirow{2}{*}{\begin{tabular}[c]{@{}c@{}}Other Fixed\\ Settings\end{tabular}}                                                                      & \multicolumn{4}{c}{Cityscapes}                                                                           \\ \cline{3-6} 
                           &                                                                                                                                                           & \multicolumn{1}{c}{mIoU} & \multicolumn{1}{c}{mF1} & \multicolumn{1}{c}{mPre} & \multicolumn{1}{c}{mRec} \\ \midrule
0.00                       & \multirow{5}{*}{\begin{tabular}[c]{@{}c@{}}Experts $=10$,\\ TopK $=5$,\\ FRL prototypes $=16$,\\ $\lambda_{\text{FRL}}=10^{-4}$\end{tabular}} & 44.23                    & 55.48                   & \bestcell{71.13}                    & 50.82                    \\
0.01                       &     & 44.74                     & 55.99                     & 68.30                    & \bestcell{53.47}                          \\
0.02                       &                                                                                                                                                           & 43.97                    & 55.19                   & 68.87                    & 52.06                    \\
0.04                       &                                                                                                                                                           & \bestcell{45.44}                    & \bestcell{57.17}                   & 70.86                    & 53.09                    \\
0.08                       &                                                                                                                                                           & 44.04                    & 55.46                   & 69.30                    & 51.78                    \\ \bottomrule
\end{tabularx}
\label{tab:abl_LB_weight}
\vspace{-0.25cm}
\end{table}

\begin{table}[tp]
\setlength{\tabcolsep}{7.3pt}
\caption{Comparison among cases with different $\lambda_{\text{FRL}}$ values}
\begin{tabularx}{\linewidth}{c|c|cccc}
\toprule
\multirow{2}{*}{$\lambda_{\text{FRL}}$} & \multirow{2}{*}{\begin{tabular}[c]{@{}c@{}}Other Fixed\\ Settings\end{tabular}}                                                                     & \multicolumn{4}{c}{Cityscapes}                                                                           \\ \cline{3-6} 
                            &                                                                                                                                                          & \multicolumn{1}{c}{mIoU} & \multicolumn{1}{c}{mF1} & \multicolumn{1}{c}{mPre} & \multicolumn{1}{c}{mRec} \\ \midrule
0.00                        & \multirow{5}{*}{\begin{tabular}[c]{@{}c@{}}Experts $=10$,\\ TopK $=5$,\\ FRL prototypes $=16$,\\ $\lambda_{\text{LB}}=0.01$\end{tabular}} & 44.23                    & 55.71                   & 68.62                    & 52.18                    \\
1e-4                        &   & 44.74                     & 55.99                     & 68.30                    & 53.47                          \\
2e-4                        &                                                                                                                                                          & 44.16                    & 55.47                   & 69.65                    & 52.19                    \\
4e-4                        &                                                                                                                                                          & \bestcell{45.44}                    & \bestcell{57.17}                   & \bestcell{71.17}                    & \bestcell{53.60}                    \\
8e-4                        &                                                                                                                                                          & 45.03                    & 56.44                   & 69.93                    & 52.29                    \\ \bottomrule
\end{tabularx}
\label{tab:abl_FRL_weight}
\vspace{-0.3cm}
\end{table}

\subsubsection{The Effect of the Load-Balancing Weight $\lambda_{\text{LB}}$ on AirMoE Performance}
\Cref{tab:abl_LB_weight} studies how strongly the load-balancing regularizer $\lambda_{\text{LB}}$ pushes the router toward an even utilization of experts. In contrast to the previous ablations, the response here is distinctly non-monotonic and interior-optimal. The performance instead is maximized at an intermediate value $\lambda_{\text{LB}}=0.04$ ($45.44$ mIoU, $57.17$ mF1), forming an inverted-U profile. The two ends of the sweep are diagnostic. At the end of $\lambda_{\text{LB}}=0$, the router is free to collapse onto a small subset of favored experts, and the metric confirms this expert collapse precisely, where mPre attains its global maximum ($71.13$) while mRec goes to its global minimum ($50.82$). This is the classic overconfident-specialist regime, in which a few dominant experts fire decisively on the classes (high precision) but the starved, rarely-routed experts leave large portions of the label space uncovered (low recall). On the other end, an overly strong constraint ($\lambda_{\text{LB}}=0.08$) begins to force uniform routing even when a token clearly belongs to one expert. This over-regularization homogenizes the experts and drags all metrics back down ($44.04$ mIoU, $51.78$ mRec). Overall, the inverted-U confirms that load balancing is beneficial only in moderation. The chosen regularization strength must be strong enough to prevent expert collapse yet weak enough to preserve the input-conditioned specialization on which AirMoE's accuracy depends.

\subsubsection{The Impact of the FRL Loss Weight $\lambda_{\text{FRL}}$ on AirMoE Performance}
\Cref{tab:abl_FRL_weight} investigates the strength of the FRL regularizer by sweeping $\lambda_{\text{FRL}}$ from $0$ to $8\text{e-}4$. The effect is non-monotonic and interior-optimal, peaking cleanly at $\lambda_{\text{FRL}}=4\text{e-}4$, which uniquely attains the best value on all metrics (\ie, $45.44$ in mIoU, $57.17$ in mF1, $71.17$ in mPre, $53.60$ in mRec). This indicates that the optimal FRL strength improves precision and recall jointly rather than trading one against the other, \ie, it genuinely sharpens the feature space instead of merely shifting the decision threshold. Specifically, at $\lambda_{\text{FRL}}=0$, the FRL objective is inactive, the prototypes are unconstrained, and the features remain comparatively diffuse, yielding the lowest mIoU ($44.23$). As the weight increases toward $4\text{e-}4$, this regularization progressively pulls same-class features toward their prototypes and enforces inter-class separation, lifting every metric to its maximum. However, an overly large $\lambda_{\text{FRL}}=8\text{e-}4$ lets the auxiliary regularizer begin to dominate the primary segmentation loss and over-constrain features toward the prototypes, at the expense of the task-required pixel-level discrimination. Consequently, mRec falls to its lowest value ($52.29$) and mIoU retreats to $45.03$. Taken together, the inverted-U confirms that the FRL regularizer is most effective as a moderate auxiliary signal. It should be strong enough to impose prototype-guided feature clustering, yet subordinate to the main segmentation objective so as not to distort task-critical discriminative cues.

\section{Conclusion} \label{sec:conclusions}
We studied MoE over a cloud--edge wireless network, where a pretrained LM backbone resides at the cloud and specialized experts are distributed across clients. We identified routing and aggregating as two communication bottlenecks with fundamentally different needs. To solve such bottlenecks, we proposed AirMoE, which decouples them both algorithmically and physically. Specifically, MoE-RM performs reliable, low-rate digital routing using compact prototype statistics, while MoE-AM realizes the statistically reweighted expert fusion directly over the multiple-access channel via channel-aware power control, achieving aggregation latency and bandwidth that are invariant to the number of activated experts. Furthermore, theoretical guarantees make AirMoE's dependence on channel quality explicit. Finally, we conducted extensive experiments on semantic segmentation that show the proposed AirMoE obtains consistent gains over MoE and single-model baselines. Current limitations include the mere focus on the idealized analog front-end, and extending AirMoE to incorporate imperfect channel-state information and synchronization errors is a promising direction for future work.

\end{document}